\numberwithin{equation}{section}
\definecolor{verde}{cmyk}{.83,.21,1,.08}
\definecolor{darkorchid}{rgb}{0.6, 0.2, 0.8}
\definecolor{darkgreen}{rgb}{0,.5,0}
\newtheorem{proposition}[equation]{Proposition}
\def\({\left(}
\def\){\right)}
\def\[{\left[}
\def\]{\right]}
\newcommand{\ii}{\mathrm{i}}
\newcommand{\dd}{\mathrm{d}}
\newcommand{\be}{\begin{equation}}
\newcommand{\ee}{\end{equation}}
\newcommand{\bea}{\begin{eqnarray}}
\newcommand{\eea}{\end{eqnarray}}
\def\nn{\nonumber}
\newcommand{\eqn}[1]{(\ref{#1})}
\newcommand{\del}{\partial}
\newcommand{\la}{\label}
\begin{document}
\title{Poisson gauge models and   Seiberg-Witten map}

\author[1]{V. G. Kupriyanov}
\author[2,3]{M. A. Kurkov}
\author[2,3]{P. Vitale}
\affil[ ]{}
\affil[1]{\textit{\footnotesize CMCC-Universidade Federal do ABC, 09210-580, Santo Andr\'e, SP, 
Brazil. }}
\affil[2]{\textit{\footnotesize Dipartimento di Fisica ``E. Pancini'', Universit\`a di Napoli Federico II, Complesso Universitario di Monte S. Angelo Edificio 6, via Cintia, 80126 Napoli, Italy.}}
\affil[3]{\textit{\footnotesize INFN-Sezione di Napoli, Complesso Universitario di Monte S. Angelo Edificio 6, via Cintia, 80126 Napoli, Italy.}}
\affil[ ]{}
\affil[ ]{\footnotesize e-mail: \texttt{vladislav.kupriyanov@gmail.com, max.kurkov@gmail.com, patrizia.vitale@na.infn.it}}
\maketitle

\abstract{The semiclassical limit of full non-commutative gauge theory is known as Poisson gauge theory. In this work we revise the construction of  Poisson gauge theory paying attention to the geometric meaning of the structures involved and advance in the direction of a further development of the proposed formalism, including the derivation of Noether identities and conservation of currents.    {For any linear  non-commutativity, $\Theta^{ab}(x)=f^{ab}_c\,x^c$, with $f^{ab}_c$ being  structure constants of a Lie algebra, an explicit form of the gauge  Lagrangian is proposed.} In particular a universal solution for the matrix $\rho$ defining the field strength and the covariant derivative is found.    {The previously known examples of $\kappa$-Minkowski, $\lambda$-Minkowski and rotationally invariant non-commutativity are recovered from the general formula.} The arbitrariness in the construction of Poisson gauge models  is addressed  in terms of Seiberg-Witten maps, i.e., invertible field redefinitions mapping gauge orbits onto  gauge orbits. 
}

\section{Introduction}
Recent investigations on  noncommutative field theory \cite{BBKL, {Kupriyanov:2020sgx}} have proposed a novel approach where gauge connections and field strengths are defined on the basis of two main requests. First, the commutative limit has to be  well defined and to give back  the standard Maxwell  theory. Second, the set of infinitesimal gauge transformations has to close a Lie algebra in such a way to be compatible with the underlying noncommutative spacetime, namely, a homomorphism has to exist between the latter and the noncommutative algebra of gauge parameters. These requests lead to a non-linear modification of Maxwell theory, solely dictated by compatibility of the noncommutative picture of space-time and theoretical consistency of electrodynamics. 

Interestingly, non-linearly modified Maxwell models, mainly based on phenomenological reasons,  go back to Born and Infeld in 1933 \cite{Born-Infeld} and a few years later by Euler and Heisenberg \cite{Euler-Heisenberg}. 
Since then, many  other models of
non-linear electrodynamics have been proposed and   analysed in  many areas of theoretical physics including  gravity, cosmology, string theory and condensed matter. A recent review  can be found in \cite{sorokin}. It would be  certainly interesting and worth to compare our findings with the latter, which in principle are based on  different premises and seem to be completely unrelated. We plan to come back to this issue in a future analysis. 

{ For the present paper, we shall deal with an $n$-dimensional manifold, $\mathcal{M}$  representing  space-time,  
 locally described by    $x^a$, $a = 0,\cdots, n-1$, a set of local coordinates. Non-commutative deformations of  space-time may be  characterised by the  Kontsevich   star product of functions on $\mathcal{M}$,  \cite{kontsevich}, which for each given Poisson bivector $\Theta^{ij}(x)$, reads
 \be
 f \star g = f\cdot g + \frac{\ii}{2}\{f,g\} + ...
 \ee
 where 
 \begin{equation}
\{f,g\}=   {\Theta^{ij}(x)}\,\partial_i f\,\partial_jg\, \la{pbr}
\end{equation}
 stands for the Poisson bracket associated with the Poisson tensor  $\Theta^{ij}(x)$, while the remaining terms, denoted through “...", contain higher derivatives of the functions $f$ and $g$. 
 
 For constant noncommutativity the deformation is 
  provided by the usual Moyal-Weyl star-product and its siblings \cite{moyal, voros} and the construction of
noncommutative gauge theories is standard and well-known (see \cite{szaborev, wess} for a review), although not completely satisfactory because of the known renormalizability issues. However, coordinate-dependent non-commutativity is non-trivial already at the classical level, because a differential calculus is needed, which ought to be compatible  with the associated $\star$ product. In the context of derivation based differential calculus the issue is discussed  in \cite{dbv, wallet, wess2}. For an approach with twisted differential calculus see for example \cite{twist1, twist2, twist3}. 

 The framework adopted in the present research is the one proposed in \cite{Kupriyanov:2020sgx}, where the noncommutative gauge theory is constructed by requiring compatibility of the gauge algebra with space-time noncommutativity, namely
 infinitesimal gauge transformations should close the non-commutative algebra
 \be
 \[\delta_f,\delta_g\] = \delta_{-\ii[f,g]_\star},\quad\quad f,g \in \mathcal{C}^{\infty}(\mathcal{M}), \la{fullgt}
 \ee 
with
\be
[f, g]_\star=   f\star g - g\star f  = \ii \{f,g\} + ...
\ee
  the star-commutator of the gauge parameters $f$ and $g$. Moreover,   the commutative limit is requested to  be well defined and standard.
 }

For a better understanding of the physical implications and  mathematical structures, we work in  a semiclassical approximation of spacetime non-commutativity. Therefore,  star commutators 
will be replaced by imaginary unit times the Poisson brackets.  In this approximation,  the full non-commutative algebra of gauge parameters defines the Poisson gauge algebra,
$
[\delta_f,\delta_g]=\delta_{\{f,g\}} ,
$ introduced in \cite{KS21}.
In the same work it was proposed an approach to the construction of (almost)-Poisson gauge transformations and the corresponding gauge algebra, based on the symplectic embedding of (almost)-Poisson structures. In a subsequent work \cite{Kup33}  Poisson gauge theory was further developed, as a dynamical field theoretical model with the Poisson gauge algebra representing its gauge symmetries. 

{
As we shall explain in detail in Sec.~\ref{Poiga}, the Poisson gauge theory is based on two essential ingredients. 
For a given Poisson bivector~$\Theta$  on $\mathcal{M}$ one has to construct:
\begin{itemize}
\item{the $n$ by $n$ matrix $\gamma(A,x)$, whose components $\gamma^{a}_{b}$ satisfy the first master equation
\be
\gamma_{ i }^{ j } \partial^{ i }_A \gamma^{ k}_{ l} - \gamma^{ k}_{ i } \partial_A^{ i } \gamma^{ j }_{ l} 
+ \Theta^{ j  i } \partial_{ i } \gamma^{ k}_{ l} 
- \Theta^{ k i } \partial_{ i } \gamma^{ j }_{ l} 
- \gamma^{ i }_{ l}\partial_{ i }\Theta^{ j  k} = 0,\la{master1}
\ee
\item{and the $n$ by $n$ matrix $\rho(A,x)$, which obeys the second master equation,
\be
\gamma^j_l \partial_A^l\rho_a^i + \rho_a^l \partial_A^i\gamma_l^j + \Theta^{jl}\partial_l\rho^i_a = 0. \la{master2}
\ee}}
\end{itemize} 
The former defines the deformed gauge transformations~\cite{Kupriyanov:2020sgx}, whilst the latter allows to introduce a covariant derivative and a  covariant field strength~\cite{Kup33}.  Hereafter we are using the notation $\partial_A^i  \equiv \frac{\partial}{\partial A_i(x)}$. In order to avoid confusions, we emphasise that in these equations the partial derivatives over coordinates act on the explicit dependence on $x$ only, whilst $A(x)$ is considered as an independent variable.

The solution of the first master equation has been constructed for an \emph{arbitrary} Poisson bivector $\Theta$, which is linear in $x$~\cite{KS21}, in terms of a single matrix-valued function. From now on we address this solution as the ``universal" one.   In the same work the deformed gauge transformation was expressed via symplectic geometric quantities, and the role of the matrix $\gamma$ in this geometric construction was clarified. However, the  role of the matrix $\rho$ in the  geometric formalism was unclear.  Moreover, the universal solution of the second master equation for the matrix $\rho$, in the same spirit as the one  for $\gamma$~\cite{KS21},   was also missing.

The present project fills both the gaps. First, we interpret  the deformed field strength in  the symplectic geometric entries and clarify the role of the matrix $\rho$ within this geometric construction. Second, we  construct the matrix $\rho$ for an \emph{arbitrary} Poisson bivector $\Theta$, which is linear in $x$, establishing that
 {\be
\rho^{-1} = \gamma - \ii\,\hat{A}.
\ee}
In this formula  $\gamma$ is given by the universal solution mentioned above, and $\hat{A}$ is the $n$  by $n$ matrix, constructed out of the gauge field $A$ and the structure constants $f^{ab}_{c}=\partial_c \Theta^{ab}$, whose components, by definition, read
\be
[\hat{A}]^b_c= -\ii f^{ab}_{c} A_a,   \la{irl}
\ee
see Sec. \ref{Poili} for details.
This universal solution for $\rho$, accompanied by the universal solution for $\gamma$, allows to construct the dynamical equations of motion for \emph{any}  Poisson bivector, linear in $x$. As we shall see in detail, there are other solutions of the master equations, i.e. for a given Poisson bivector $\Theta$ one can build different Poisson gauge theories. In the present article} we will discuss in more details the arbitrariness in the definition of the Poisson gauge theory and provide the corresponding Seiberg-Witten maps in some particular examples, {confronting the solutions, obtained in this paper with  the ones,  known before.}

{The paper is organized as follows. In sections \ref{Poiga} - \ref{Maxpoi} we review in detail the formulation of Poisson gauge theory by clarifying its geometric content and discussing Noether identities and conserved currents. {The main novelties of Sec.~\ref{Poiga} are the expression for the deformed field strength and gauge covariant derivative in terms of symplectic geometric structures, together with   the role of the matrix $\rho$ within the  geometric construction.}
Starting from section \ref{Poili} we concentrate on linear noncommutative structures. 
In particular we obtain a universal solution for the matrix $\rho$, Eq.~\eqref{irl}, which enters the definition of the covariant derivatives and the field strength. In Sections \ref{cases} and \ref{kappa} we illustrate our general findings in two noteworthy cases,    namely the $\mathfrak{su}(2)$-like non-commutativity and the $\kappa$-Minkowski space.  {For the latter 
we obtain new solutions 
 for the matrices $\gamma$ and $\rho$, 
 different from those previously obtained in \cite{KKV}. The reason for that is non-uniqueness of the solution of corresponding master equations.}  Section \ref{seiwit} discusses this arbitrariness in terms of Seiberg-Witten maps and exhibits explicit maps for the linear cases considered. We conclude with a discussion Section and an appendix. }

\section{Poisson gauge theory}\label{Poiga} 

In this section we will describe the main points of the construction of the Poisson gauge theory based on the symplectic embedding of Poisson manifolds.  The main emphasis will be on the geometric meaning of all appearing objects. {First we review the known results, regarding the gauge transformations and the matrix $\gamma$. After that we discuss the deformed field strength and the deformed covariant derivatives. In particular, we present  new expressions of these objects in terms of geometric structures  and we outline  the role of the matrix $\rho$ in the symplectic geometric construction. }
\subsection{{Gauge transformations.}}
The Poisson gauge transformations $\delta_fA$ should satisfy the following two conditions: they should close  the gauge algebra,
\begin{equation}
[\delta_f,\delta_g]A=\delta_{\{f,g\}}A\,,\label{ga}
\end{equation}
and reproduce the standard $U(1)$ gauge transformations in the classical limit, 
\be
\lim_{ {\Theta} \to 0}\delta_fA_a=\partial_af. \la{gcomlim}
\ee
If $\Theta^{ij}$ is constant, one may easily see that the expression, 
\begin{equation}
\delta^{can}_fA=d f+ \{A,f\}_{can}\,,\label{gtcan}
\end{equation}
 satisfies (\ref{ga}). However, for non-constant $\Theta^{ij}$ the standard Leibniz rule with respect to the partial derivative is violated, $\partial_a\{f,g\}\neq\{\partial_af,g\}+\{f,\partial_ag\}$, therefore the same expression will not close the algebra (\ref{ga}) anymore, it being 
 \be\label{gal}
 [\delta_f,\delta_g]A=  d\{f,g\} + \{A,\{f,g\}\} - d\Theta^{ij}(x)\,\del_if\del_j g.
 \ee
  To overcome this difficulty one has to modify the expression for the gauge transformations (\ref{gtcan}) introducing  corrections which compensate the last unwanted term in \eqn{gal}.
  {A suitable modification has the following form~\cite{Kupriyanov:2020sgx}:
  \begin{equation}\label{gtAprimo}
\delta_f A_a =\gamma^r_a(A)\,\partial_r f(x) +\{A_a(x),f(x)\},
\end{equation}
where the matrix $\gamma$ satisfies the first master equation~\eqref{master1}. Such a construction has an elegant geometric interpretation in terms of symplectic embeddings and constraints, described below.} The idea of symplectic embeddings of Poisson manifolds is quite general and widely used in many different contexts. It may be traced back to   the so called symplectic realizations introduced by  Weinstein \cite{wein} and further developed in \cite{crainic}. The approach followed here is due to \cite{Kup21}.

{Both the gauge transformation~\eqref{gtAprimo} and the first master equation~\eqref{master1} can be obtained in a simple way, considering } an extended space,  by means of symplectic embedding techniques \cite{KS21}. 
{The formalism is essentially based on  {two} ingredients. The first key ingredient of the construction is the symplectic embedding itself:}  to each coordinate $x^i$ of the initial space-time ${\mathcal{M}}$ one associates  a conjugate variable $p_i$, in such a way that the corresponding extended Poisson brackets on $T^*\mathcal{M}$,
\begin{equation}\label{PB1}
\{x^i,x^j\}=\Theta^{ij}(x)\,,\qquad\{x^i,p_j\}=\gamma^i_j(x,p)\,,\qquad \{p_i,p_j\}=0\,,
\end{equation}
 satisfy the Jacobi identity, under the condition that the matrix $\gamma^i_j(x,p)$  be non-degenerate.   For constant $\Theta^{ij}$ one finds $\gamma^i_j(x,p)=\delta^i_j$, so that $\{f(x),p_i\}=\partial_if(x)$, namely  the Poisson bracket with the auxiliary variable $p_i$  is just a partial derivative of $f$. For $\Theta^{ij}(x)$   not constant the expression for $\gamma^i_j(x,p)$ is more complicated. The Jacobi identity for the algebra (\ref{PB1}) implies the partial differential equation  \cite{Kup21},
\begin{equation}
\gamma^l_b\,\partial^b_p\gamma_a^k-\gamma^k_b\,\partial^b_p\gamma^l_a+\Theta^{lm}\,\partial_m\gamma_a^k-\Theta^{km}\,\partial_m\gamma_a^l-\gamma^m_a\,\partial_m\Theta^{lk}=0\,\label{first}
\end{equation}
with $\partial_m=\partial/\partial x^m$ and $\partial^b_p=\partial/\partial p_b$. 
{The second key ingredient of the construction is the set of  constraints 
\begin{equation}
\Phi_a:=p_a-A_a(x)\label{Phi}, \quad\quad a =1,...,n
\end{equation}
which allows to get rid of the auxiliary variables $p$. Imposing this constraint on~\eqref{first} we get exactly Eq.~\eqref{master1}. In other words, \emph{the first master equation is simply the Jacobi identity in the extended space, obtained via the symplectic embedding. }}

In order to understand how this construction allows for the generalization~\eqref{gtAprimo} of gauge transformations in presence of a non-trivial Poisson bracket in space-time, let us review the procedure, following  \cite{Kurkov:2021kxa}. We first consider the standard setting of $U(1)$ gauge theory, with $\Theta=0$. Then, the cotangent bundle $T^*\mathcal{M}$ is endowed with the canonical symplectic form $\omega_0$ which locally reads  $dp_i\wedge d x^i$.  The gauge field $A$ is a local one-form on $\mathcal{M}$. It is therefore associated with a local section of the cotangent space $T^*\mathcal{M},\,  s_A: \mathcal{U}\rightarrow T^*\mathcal{U}$, through a local trivialisation, $\psi_\mathcal{U}^{-1}(s_A(x))= (x, A(x))$, where $\mathcal{U}$ is a local chart on $\mathcal{M}$.   
Let 
\be\label{xiA}
\xi_A= \lambda_0-\pi^*A
\ee
be a  one-form on $T^*\mathcal{U}$ with $\lambda_0$ the Liouville form (locally equal to $p_i dx^i$) and $\pi:T^*\mathcal{M}\rightarrow \mathcal{M}$ the usual projection map. The one-form $\xi_A$ vanishes locally, through the pullback $s_A^*$
\be s_A^*(\xi_A)=0\label{constraint}
\ee
because $s_A^*(\lambda_0)=A= (\pi\circ s_A)^*(A)$. This means that $\xi_A$ vanishes exactly on ${\rm im}(s_A)\subset T^*\mathcal{U}$. Therefore the latter is identified by the constraint \eqn{constraint}. This amounts to   fix $p$, which is the fibre coordinate at $x$,  to its value $A(x)$ identified by the section $s_A$, i.e. {the relation~\eqref{constraint} corresponds to  the constraint~\eqref{Phi}, rewritten in the language of differential geometry. }
Then, the infinitesimal gauge transformation of the gauge potential $A$, with gauge parameter $f$, may be defined in terms of the canonical Poisson bracket $\omega_0^{-1}$ as follows
\be\label{deltacan}
\delta_f A_n (x) = s_A^*\{\pi^*f, \xi_{A_n}\}_{\omega_0^{-1}}= \frac{\del f}{\del x^m }\frac{\del  \Phi_{n}}{\del p_m}= \del_n f\,,
\ee
that is to say, loosely speaking, one first  performs the Poisson bracket in $T^*\mathcal{U}$, then goes to its local form through $s_A$, recovering the standard infinitesimal gauge variation of the potential. 

The  complicated procedure described above, certainly redundant in the canonical case $\Theta=0$, is  extremely useful and constructive 
for the case $\Theta\ne 0$. One first performs a symplectic embedding  of $(\mathcal{M}, \Theta)$ as described above, with symplectic one-form $\omega$ given by the inverse of the non-degenerate Poisson bracket  \eqn{PB1}, with Poisson tensor
\be\label{poitens}
   {\Pi(x,p)}= \Theta^{ij}(x)\del_i\wedge \del_j  +\gamma^{i}_j (x,p)\del_i\wedge  \del^j_p
\ee
while  the image of $U\subset \mathcal{M}$ through the local section $s_A$ is still defined by the constraint \eqn{xiA}. Then, the infinitesimal gauge transformation of the gauge potential is formally defined in the same way  as in the previous case,  \eqn{deltacan}, except for the fact that the canonical Poisson bracket is to be replaced by the Poisson bracket  \eqn{PB1}.  Therefore we find
\bea
\delta_f A_a:&=& s_A^*\{\pi^*f, \xi_{A_a}\}_{\Pi}= \Theta^{rs} \frac{\del f}{\del x^r}\frac{\del \xi_{A_a}}{\del x^s}+\gamma^r_s(x,A) \frac{\del f}{\del x^r} \frac{\del \xi_{A_a}} {\del p_s}\nn\\
&=&\{A_a, f\}_\Theta+ \gamma^r_a(x,A) \frac{\del f}{\del x^r},
\eea
{what is nothing but Eq.~\eqref{gtAprimo}.  }
The result can be restated in a simpler language, by replacing the constraint~\eqref{constraint} with its local form~\eqref{Phi}
\begin{equation}\label{gtA}
\delta_f A_a:=\{f,\Phi_a\}_{\Phi_a=0}=\{A_a(x),f(x)\}+\gamma^r_a(A)\,\partial_r f(x)\,.
\end{equation}
{Summarising, we see that\emph{ the  nontrivial gauge transformation~\eqref{gtAprimo} simply corresponds to the Poisson bracket~\eqref{gtA} of the gauge parameter with the constraint~\eqref{Phi} in the extended space, obtained via a symplectic embedding. The role of the matrix $\gamma$ is also clear: it defines the symplectic embedding~\eqref{PB1}. } }

\subsection{{Field strength and covariant derivative}}
Now let us proceed to the definition of the field strength. For  constant Poisson bracket $\Theta$ one may simply set,
\begin{equation}
F^{can}_{ab}:=\{p_a-A_a(x),p_b-A_b(x)\}=\partial_aA_b-\partial_bA_a+\{A_a,A_b\}_{can}\,.
\end{equation}
This quantity transforms covariantly under the gauge transformation (\ref{gtcan}), namely, $\delta_f^{can}F^{can}_{ab}=\{F^{can}_{ab},f\}_{can}$, and reproduces the standard $U(1)$ field strength in the commutative limit,  with $\lim_{ {\Theta}\to0} F^{can}_{ab} = \partial_aA_b-\partial_bA_a$. So, following the logic of the symplectic embedding it would be reasonable to test the same structure for coordinate-dependent Poisson brackets,  by posing
\begin{equation}
F_{ab}:=\{\Phi_a,\Phi_b\}_{\Phi=0}=\gamma_a^l(A)\,\partial_lA_b-\gamma_b^l(A)\,\partial_lA_a+\{A_a(x)\,A_b(x)\}\,.
\end{equation}
However, by checking its behaviour under Poisson gauge transformation  (\ref{gtA}) we find (see details in Appendix \ref{AppB})
\begin{equation}
\delta_f F_{ab}=\{F_{ab},f\}+\left(\partial^c_p\{f,p_a\}\right)_{\Phi=0}F_{cb}-\left(\partial^c_p\{f,p_b\}\right)_{\Phi=0}F_{ca}\,.\label{gccF}
\end{equation}

The first term  is exactly what we need for the gauge covariance condition, but the other two terms are undesirable. 

Interestingly, a solution may be found by performing a  transformation in the basis of constraints 
\begin{equation}
\Phi_a\,\,\to\,\,\Phi^\prime_a:=\rho_a^m(A)\,\Phi_m\,,\label{Phi1}
\end{equation}
with  $\rho_a^m(A)$ a non-degenerate matrix to be determined by the covariance request. The non-degeneracy ensures that  
\begin{equation}
\Phi^\prime_a=0\,,\qquad\Leftrightarrow\qquad \Phi_a=0\,.
\end{equation}
The field strength is thus defined by means of  the new basis of constraints according to\footnote{Notice that the field strength was already defined in previous works as a deformation of the classical one; the two definitions yield exactly the same expression, the advantage being here in the geometric interpretation that appears to be   more natural.}
\begin{equation}
{\cal F}_{ab}:=\{\Phi^\prime_a,\Phi^\prime_b\}_{\Phi^\prime=0}=\rho_a^m(A)\,\rho_b^n(A)\,F_{mn}\,.\label{pfs}
\end{equation}
By computing  its  gauge variation  we find
\begin{eqnarray}
\delta_f {\cal F}_{ab}=\delta_f \left(\rho_a^m(A)\right)\rho_b^n(A)\,F_{mn}+\rho_a^m(A)\delta_f\left(\rho_b^n(A)\right)F_{mn}+\rho_a^m(A)\,\rho_b^n(A)\,\delta_fF_{mn}\,.\label{gcc1}
\end{eqnarray}
Observing that,
\begin{equation}\label{w7a}
\delta_f \rho_a^m(A)=\partial_A^b\rho_a^m(A)\,\{f,\Phi_b\}_{\Phi=0}=\{f,\rho_a^m(p)-\rho_a^m(A)\}_{\Phi=0}\,,
\end{equation}
and using (\ref{gccF}) we obtain 
\begin{eqnarray}\label{gcc2}
\delta_f {\cal F}_{ab}&=&\{{\cal F}_{ab},f\}+
\left[\{f,\rho_a^m(p)\}+\rho_a^c(p)\,\partial_p^ m\{f,p_c\}\right]_{\Phi=0}\rho_b^n(A)\,F_{mn}\notag\\
&+&\rho_a^m(A)\left[\{f,\rho_b^n(p)\}+\rho_b^c(p)\,\partial_p^ n\{f,p_c\}\right]_{\Phi=0}F_{mn}\,.\notag
\end{eqnarray}
We thus conclude that the field strength (\ref{pfs}) transforms covariantly,
\begin{equation}
\delta_f {\cal F}_{ab}=\{{\cal F}_{ab},f\}\,,\label{gcc}
\end{equation}
if $\rho_a^m(x,p)$ satisfies the equation
\begin{equation}\label{rho}
\{f(x),\rho_a^i(x,p)\}+\rho_a^b(x,p)\,\partial_p^ i\{f(x),p_b\}=0\,, \qquad \forall f(x)\,.
\end{equation}
In local coordinates this yields
\begin{equation}\label{second}
\gamma^j_b\,\partial^b_p\, \rho_a^i+\rho_a^b\,\partial^i_p\gamma^j_b+\Theta^{jb}\,\partial_b\rho_a^i=0\,.
\end{equation} 
{Imposing the constraint~\eqref{Phi}, i.e. setting  $\rho^i_a(x,A(x))=\rho^i_a(x,p)_{\Phi=0}$ we thus recover the second master equation~\eqref{master2}. }

 {Let us now   discuss the covariant derivative. To this, let  $\psi$ be a field  which transforms upon the deformed gauge transformations according to the rule 
\be\label{gmf}
\delta_f\psi:=\{f,\psi\}.
\ee
Using the  new basis of constraints we can define the covariant derivative $\cal{D}\psi$ in the following way
\begin{equation}\label{gcd}
{\cal D}_a\psi:=\{\psi,\Phi^\prime_a\}_{\Phi=0}
\end{equation}
which explicitly yields
\be
{\cal D}_a\psi=\rho_a^m(A)\, (\gamma_a^\ell(A) \del_\ell \psi+\{A_a,\psi\}).
\ee
Notice that this result was already found in \cite{Kup33} by imposing covariance under gauge transformation, but the interpretation in terms of the new constraint was missing. By direct computation it can be checked \cite{Kup33} that it   transforms correctly, namely
\begin{equation}\label{gccd}
\delta_f\left({\cal D}_a\psi\right)=\{{\cal D}_a\psi,f\},
\end{equation}
and reproduces the commutative limit
\begin{equation}\label{w5a}
\lim_{ {\Theta}\to0}{\cal D}_a\psi=\partial_a\psi\,.
\end{equation} }

{Summarising, we see that \emph{the deformed field strength and the deformed covariant derivative are respectively defined through the Poisson brackets~\eqref{pfs} and~\eqref{gcd} of the ``rotated" constraints~\eqref{Phi1} with themselves and with the gauge parameter. The role of the matrix $\rho$ is clear as well: it performs a redefinition of  the constraints, in the extended space, obtained via the symplectic embedding.}}

As a final remark to this section, one may ask what if we start all over and replace the    transformation of gauge potentials \eqn{gtA} with an analogous  definition in terms of the newly defined constraints (\ref{Phi1}),
\begin{equation}
\delta^\prime_f A_a:=\{f,\Phi^\prime_a\}_{\Phi^\prime=0}=\rho_a^m(A)\,\delta_fA_m\,.
\end{equation}
It can be checked that the latter  does not close the desired gauge algebra, yielding
\begin{equation}
\left[\delta^\prime_f,\delta^\prime_g\right]\neq\delta^\prime_{\{f,g\}}\,.
\end{equation}
So, as matter of fact, there are two   sets of constraints,  one, given by   Eq. (\ref{Phi}), which is needed for the definition of  Poisson gauge transformations (\ref{gtA}), the other, represented by  Eq. (\ref{Phi1}), which allows for the definition of the field strength and the covariant derivative, with the desired covariance property (\ref{gcc}).

{As we have seen in this section, the key ingredients of the Poisson gauge theory, viz the deformed gauge transfomation, the deformed  field strength, and the deformed covariant derivative, are completely determined in terms of the matrices $\gamma^j_b(x,A)$ and $\rho_a^i(x,A)$. In the next section we will show how to put these objects together in order to construct the dynamical equations of motion, which exhibit the deformed version of the first pair of the Maxwell's equations. The corresponding Lagrangian formulation will allow us to obtain  \emph{new} results for the Noether identities.  

We will also consider the deformed version of the second pair of the Maxwell's equations. The original derivation, presented in~\cite{Kup33}, was actually based on a brute-force deformation of the corresponding commutative counterpart without clear connection to the symplectic embeddings.
The new derivation, instead, being more elegant, exploits the symplectic geometric construction, discussed above.  We start from the latter.
}

\section{Maxwell-Poisson equations}\label{Maxpoi}
The two Maxwell equations which correspond to gauge constraints, namely those summarised by the Bianchi identity, descend from the Jacobi identity for the modified constraints (\ref{Phi1}),
\begin{equation}
\{\Phi^\prime_a,\{\Phi^\prime_b,\Phi^\prime_c\}\}+\mbox{cycl} (abc)=0\,.
\end{equation}
   {Indeed, using {the identity (see \eqn{r2})
   \begin{equation}\label{id3}
\{F(x,p),G(x,p)\}_{\Phi=0}-\{F(x,p),G(x,A(x))\}_{\Phi=0}=\left(\partial^m_pG(x,p)\right)_{\Phi=0}\,\{F(x,p),\Phi_m\}_{\Phi=0}\,,
\end{equation}}
   and cyclic permutations one finds,
\begin{eqnarray}
&&\{\Phi^\prime_a,\{\Phi^\prime_b,\Phi^\prime_c\}\}_{\Phi=0}+\mbox{cycl} (abc)=\\
&&{\cal D}_a\left({\cal F}_{bc}\right)+\rho_a^i(A)\left[\{ \Phi_i,\rho_b^j(A)\}_{\Phi=0}\,\rho_c^k(A)+\rho_b^j(A)\,\{\Phi_i,\rho_c^k(A)\}_{\Phi=0}\right]\{\Phi_j,\Phi_k\}_{\Phi=0}+\notag\\
&&\rho_a^i(A)\,\rho_b^j(A)\,\rho_c^k(A)\left(\partial^m_p\{\Phi_j,\Phi_k\}\right)_{\Phi=0}\left\{\Phi_m,\Phi_i\right\}_{\Phi=0}+\mbox{cycl.}(abc)\,.\notag
\end{eqnarray}
By explicitly computing the Poisson brackets of constraints (see \cite{Kup33}, Sec. 5.1 for details)
 one arrives at}
\begin{equation}\label{bi}
{\cal D}_a\left({\cal F}_{bc}\right)-{\cal F}_{ad}\,\mathcal{B}_b{}^{de}\,{\cal F}_{ec}-({\cal K}_{ab}{}^e-{\cal K}_{ba}{}^e)\,{\cal F}_{ec}+\mbox{cycl}(abc)=0\,,
\end{equation}
where, 
\begin{eqnarray}\label{F}
\mathcal{B}_b{}^{de}(A)&=&\left(\rho^{-1}\right)_j^d\left(\partial^j_A\rho_b^m(A)-\partial^m_A\rho_b^j(A)\right)\left(\rho^{-1}\right)_m^e\,,\label{Lambda}\\
{\cal K}_{ab}{}^e(A)&=&\rho_a^i(A)\,\gamma^m_i(A)\,\left(\partial_m\rho_b^j(x,p)\right)_{\Phi=0}\left(\rho^{-1}\right)_j^e\,.\label{Kappa}
\end{eqnarray}
Both $\mathcal{B}$ and $\mathcal{K}$ tend to zero in the classical limit  $ {\Theta}\to 0$, so that Eq. (\ref{bi}) reproduces the correct classical  result   $\del_aF_{bc} +{\rm cycl} (abc)=0$. Moreover, the first term of the identity is gauge
covariant by construction, implying that the whole expression is gauge covariant (which could be however checked by direct inspection).

Now we turn to the gauge covariant deformation of the remaining Maxwell equations, namely those with true dynamical content, $\partial_aF^{ab}_0=0$. Taking into account (\ref{gcc}) and (\ref{gccd}) the natural candidate reads,
\begin{equation}\label{Es}
{\cal E}_{N}^b:={\cal D}_a {\cal F}^{ab} =0\,,
\end{equation}
where the subscript $N$ stands for ``natural''. This quantity transforms covariantly, $\delta_f{\cal E}_{N}^b=\{{\cal E}_{N}^b,f\}$, and reproduces the dynamical Maxwell equations in the classical  limit, ${\cal E}_{N}^b{\rightarrow}\partial_aF^{ab}_0$ for ${ {\Theta}\to0}$. 

An alternative way of obtaining  the deformed equations of motion    was proposed in \cite{Kupriyanov:2020sgx}, starting from  an action principle. Basically the idea is the following: having in hands the gauge covariant Poisson field strength (\ref{pfs}) there is a natural gauge covariant deformation of the standard Lagrangian, which reads
\begin{equation}
{\cal L}_{g}=-\frac14\,{\cal F}_{ab}\,{\cal F}^{ab}\,,\qquad\mbox{with}\qquad \delta_f{\cal L}_{g}=\{{\cal L}_{g},f\}\,.
\end{equation}
Then, introducing an appropriate measure $\mu(x)$, such that for any two Schwartz functions $f$ and $g$ the following holds
\begin{equation}\label{measure}
\int\!\dd^nx\,\mu(x)\,\{f,g\}=0\,\qquad  \Leftrightarrow \qquad \partial_l\left(\mu(x)\,\Theta^{lk}(x)\right)=0\,,
\end{equation}
one constructs the gauge invariant action,
\begin{equation}\label{ag}
S_g=\int\!\dd^nx\,\mu(x)\,{\cal L}_{g}\,,\qquad \mbox{such that,}\qquad \delta_f S_g=0\,.
\end{equation}
The corresponding Euler-Lagrange equations,
\begin{equation}\label{EL1}
{\cal E}^b_{EL}:=\frac{\delta S_g}{\delta A_c}=0\,,
\end{equation}
 are gauge covariant by construction. 
 Without going into the tedious calculations which can be found in \cite{Kup33} we write here,
\begin{eqnarray}\label{EL}
{\cal E}^b_{EL}=\rho_c^b\left[\mu\,{\cal D}_a\left({\cal F}^{ac}\right)+\frac{\mu}{2}\,{\cal F}^{cb}\,{\cal B}_b{}^{de}\,{\cal F}_{de}-\mu\,{\cal F}^{db}\,{\cal B}_b{}^{ce}\,{\cal F}_{de}+\left(\rho^{-1}\right)^c_k{\cal F}^{ab}\partial_i\left(\mu\,\rho_a^l\,\rho_b^k\,\gamma^i_l\right)\right]\,.
\end{eqnarray}

It is worth mentioning here that for some specific choices of the space-time Poisson  structure the equations of motion constructed according to (\ref{Es}) and (\ref{EL}) are equivalent. In particular, for the $su(2)$-like Poisson structure, {$\Theta^{ab}(x)=2\,\alpha\,\varepsilon^{ab}{}_c\,x^c$}, the integration measure is constant $\mu(x)=1$ and the additional terms in (\ref{EL}) vanish in such a way that,\footnote{In this particular situation the matrix $\rho_a^b(A)$ plays the role of a Lagrangian multiplier in the sense of non-Lagrangian systems \cite{GK}. The original equations of motion, ${\cal E}^a_{N}=0$, are non-Lagrangian, however the multiplication by the non-degenerate matrix $\rho_a^b(A)$ transforms them into an equivalent set of Euler-Lagrange equations, that is  (\ref{EL1}),  for the action (\ref{ag}). }
\begin{equation}\label{ELS}
{\cal E}^b_{EL}=\rho_a^b\,{\cal E}^a_{N}\,.
\end{equation}
On the other hand, for the $\kappa$-Minkowski non-commutativity the integration measure $\mu(x)$ is non-trivial \cite{Pachol:2015qia, KKV}, so the relation (\ref{ELS}) does not hold. In  Sects. \ref{cases} and  \ref{kappa} we will discuss these two cases in more detail.

The advantage of working with the action principle formalism is that in a reasonably simple way we may introduce sources and derive the corresponding conservation equations, as well as the Noether identities for the original equations of motion. Following the standard approach we introduce the current $j_a(x)$ adding the term,
\begin{equation}\label{sint}
S_{int}=-\int \dd^n x \,\mu(x) \,j^a \,A_a\,,
\end{equation}
to the action (\ref{ag}). The resulting Euler-Lagrange equations become,
\begin{equation}\label{eoms}
{\cal E}^a_{EL}=j^a\,.
\end{equation}
The interaction  term should be gauge invariant, i.e.,
\begin{eqnarray}
\delta_fS_{int}&=&-\int \dd^nx\,\mu(x) \,j^a\,\left(\gamma^l_a(A)\,\partial_lf(x)+\{A_a(x),f(x)\}\right) \\
&=&\int\dd^n x\left[\partial_l\left(\mu \,j^a\,\gamma^l_a(A)\right)+\mu\,\{A_a,j^a\}\right]f\equiv0\,.
\end{eqnarray}
Thus gauge invariance of the action implies the current conservation equation,
\begin{equation}\label{current}
\partial_l\left(\mu \,j^a\,\gamma^l_a(A)\right)+\mu\,\{A_a,j^a\}=0\,.
\end{equation}
The same logic applied to the action (\ref{ag}) results in the Noether identities for the equations of motion,
\begin{equation}\label{Noether}
\partial_l\left(\mu \,{\cal E}^a_{EL}\,\gamma^l_a(A)\right)+\mu\,\{A_a,{\cal E}^a_{EL}\}=0\,.
\end{equation}
In the classical limit $\Theta\to0$ both (\ref{current}) and (\ref{Noether}) reduce to the standard relations, $\partial_a\,j^a=0$, and $\partial_a\partial_b\,F^{ab}=0$.

The  current conservation condition \eqn{current}, together with explicit solutions of the equations of motion \eqn{eoms} deserve further investigation. In order to understand their physical meaning, a starting approach could be  to address the problem   within specific space-time models. We plan to come back to this issue  in future research. 

{Once the prescription for the construction of Poisson gauge theory is established, we are interested in the explicit solutions of the master equations~\eqref{master1} and~\eqref{master2}. In previous works we have investigated several particular examples of linear non-commutative structures, such as, the kappa-Minkowski, the $su(2)$ and the $\lambda$-Minkowski cases.  In all these cases,  the  master equations were solved explicitly. As we announced in the Introduction, the universal solution of the first master equation is known for any linear non-commutativity~\cite{KS21}. In the subsequent section we shall obtain a universal solution of the second master equation.}

\section{Linear Poisson structures}\label{Poili}
{Consider} a  linear Poisson structure 
\be
 {\Theta^{ab} =  f^{ab}_{c}\, x^c \la{linTheta}.}
\ee 
The quantities $f^{ab}_{c}$, are  structure constants, 
satisfying the Jacobi identity
\be
f^{kl}_{i} f^{ja}_{l}  + f^{j l}_{i} f^{ak}_{l} + f^{al}_{i}f^{kj}_{l} = 0. \la{Jacobi}
\ee
  In this case, one can consider  solutions  which do not depend on $x$ explicitly, so the master equations~\eqref{first} and~\eqref{second} reduce to
{\be
\gamma_{ i }^{ j } \partial^{ i }_A \gamma^{ k}_{ l} - \gamma^{ k}_{ i } \partial_A^{ i } \gamma^{ j }_{ l} 
- \gamma^{ i }_{ l}f_{i}^{jk} = 0\la{firstlin}
\ee}
and
{\be
\gamma^j_l \partial_A^l\rho_a^i + \rho_a^l \partial_A^i\gamma_l^j = 0 \la{secondlin}
\ee}
respectively.
Interestingly, for any  $\Theta$ of the form~\eqref{linTheta},   Eq.~\eqref{firstlin} has been solved  in terms of a \emph{universal} matrix function \cite{KS21}. The latter may be conveniently described 
by  introducing  the following notation
 {\be
\gamma(A) = G(\hat{A}),  \quad\quad \hat{A} \equiv A_{a}\mathfrak{e}^{a},  \la{gensolu1}
\ee}
with\footnote{Ref.~\cite{KS21} operates with the function $\chi(u) = \sqrt{\frac{u}{2}} \cot{\sqrt{\frac{u}{2}}} - 1$ and the notation $M = - \hat{A}^2 $. 
} the function $G$  given by,
 {\be
G(p) := \frac{\ii \, p}{2} + \frac{p}{2} \cot{\frac{p}{2}} = \sum_{n=0}^{\infty} (\ii p)^n B_n^{-}.   \la{Gdef}
\ee}
Here $B_n^{-}$, $n\in\mathbb{Z}_+$ stand for the Bernoulli numbers (with the index ``minus"), and $\mathfrak{e}^{a}$, $a=0,...,n-1$ are    $n\times n$ matrices so defined:
\be
[\mathfrak{e}^{a}]^{b}_c = -\ii f^{ab}_{c}. \la{commrel}
\ee
When all $\mathfrak{e}^{a}$ are linearly independent,  they are the
generators of the adjoint representation of an $n$-dimensional Lie algebra $\mathfrak{g}$, defined by the structure constants $f^{ab}_c$.\footnote{More precisely, the linear Poisson bracket \eqn{linTheta} becomes the Kirillov-Souriau-Konstant bracket which is defined on the dual of $\mathfrak{g}$, $\mathfrak{g}^*\equiv\mathbb{R}^n$. Lie algebra type Poisson brackets are also referred to as Lie-Poisson brackets, not to be confused with Poisson-Lie brackets, which appear in the semi-classical limit of quantum groups. } In particular, the commutation relation
 \be
 [\mathfrak{e}^{a},\mathfrak{e}^b] = \ii f^{ab}_{c} \mathfrak{e}^c,
 \ee 
is equivalent to the Jacobi identity~\eqref{Jacobi}.  

In order to solve  the second master equation, Eq. \eqn{secondlin}, we make the hypothesis that, similarly to the first master equation,  there exists a \emph{universal} function $F(p)$, such that the matrix function,
 {\be
\rho(A) = F(\hat{A}), \la{conjecture}
\ee
is a a solution
for any Lie-Poisson bivector $\Theta$.
%
} In order to prove the existence of $F$, we start from the $\mathfrak{su}(2)$ case, elaborating on a known solution.

\subsection{Candidate solution} \label{candidate}
For the $\mathfrak{su}(2)$-case\footnote{Let us recall that  a star product  associated with this kind of noncommutatiivity was originally introduced in \cite{hammou}.},
\be
f^{jk}_{l}  =  2\alpha\,\varepsilon^{j k}_{~~\,l},\quad \quad \varepsilon^{j k}_{~~\,l}:= \varepsilon^{j k s}\,\delta_{s l}, \la{su2}
\ee
a solution of the second master equation was found in ~\cite{Kup33}. 
 It reads
\be
\rho^i_a(A) = \delta_a^i + \alpha \,\varepsilon^{ik}_{~~a}\,A_k\, \zeta(z\,\alpha^2 ) -  \alpha^2\,(\delta_a^i \,z - A^iA_a)\,\tau  (z\,\alpha^2), \quad\quad z := A_{ i } A^{ i },\quad\quad A^{ j} := \delta^{ j\xi} A_{\xi}. \la{su2soluRho}
\ee
Let us show that this solution can be put in  the form \eqn{conjecture} for a specific function $F$.
 Both form factors in \eqn{su2soluRho}
\be
\zeta(v) := -\frac{(\sin{\sqrt{v}})^2}{v} = \sum_{k=0}^{\infty}\,\zeta_k\, v^k,\quad\quad \tau  (v) = -\frac{1}{v}\cdot\left(\frac{\sin{2\sqrt{v}}}{2\sqrt{v}} - 1\right) =\sum_{k=0}^{\infty}\,\tau_k \, v^k,
\ee
are analytic functions of the variable $v$. As for the third term in ~\eqref{su2soluRho},
introducing the projector
\be
\hat{M}_{i}^{j} :=  {\delta}^{j}_{i} - \frac{{A}^{j} {A}_{i}}{z}, \quad\quad \hat{M}^2 = \hat{M},
\ee
 we obtain\footnote{A similar computational trick has been used in~\cite{Kurkov:2021kxa}.}:
\bea
&&\!\!\!\!\!\!\!\!\!\! -\alpha^2\,(\delta_a^i \,z - A^iA_a)\,\tau  (z\,\alpha^2) =-\hat{M}^{i}_a\cdot(s\tau  (s))\big|_{s = z\alpha^2}
 =- \sum_{k=0}^{\infty} \tau_k  \, (z\alpha^2)^{k+1}\,\underbrace{\hat{M}^{i}_a}_{[\hat{M}^{k+1}]^{i}_a} \\
&&\!\!\!\!\!\!\!\!\!\!= -\sum_{k=0}^{\infty} \tau_k  \, [(z\alpha^2\hat{M})^{k+1}\,]^{i}_a  =-\Big[(s\tau  (s))\big|_{s = z\alpha^2\hat{M}}\Big]^{i}_a 
= -\Big[\bigg(\frac{p^2}{4}\,\tau  \bigg(\frac{p^2}{4}\bigg)\bigg)\bigg|_{p = \hat{A}}\bigg]^i_a
 =: T (\hat{A})_a^i, \nn
\eea
where we took into account the fact that\footnote{We remind that $\hat{A}^{k}_l = -\ii f^{jk}_l A_j $, see Eq.~\eqref{gensolu1} and Eq.~\eqref{commrel}.}
$
\frac{\hat{A}^2}{4} = z\,\alpha^2 \hat{M},
$
and the form factor $T $ is given by
{\be
T (p) := \frac{\sin{p}}{p} - 1. 
\ee}

Now we elaborate the second term of Eq.~\eqref{su2soluRho}.
For the $\mathfrak{su}(2)$ case one can easily prove, e.g. by induction, the equalities
\be
\hat{A}^i_j\,(4 \,z\, \alpha^2 )^n = \left[\hat{A}^{2n+1}\right]^i_j,\quad\quad \forall n\in\mathbb{Z}_+.
\ee 
Therefore for a generic \emph{even} analytic function 
$
Q(w) = \sum_{k=0}^{\infty} c_k \,w^{2k}
$
the following relation holds:
\be
\hat{A}^i_j \,Q(2\,\sqrt{z}\,|\alpha|) = \left[\hat{A}\,Q(\hat{A})\right]^{i}_{j}.
\ee
On chosing  $Q(w) = \zeta(w^2/4)$, we rewrite the second term of~\eqref{su2soluRho} as follows:
\be
\alpha \,\varepsilon^{ik}_{~~a}\,A_k\, \zeta(z\,\alpha^2 ) =-\frac{1}{2}\,\hat{A}^i_a\cdot\zeta(z\,\alpha^2 ) = Z(\hat{A})^i_a,
\ee
where 
{\be
U(p) = 2\ii \cdot\frac{\big(\sin{\frac{p}{2}}\big)^2}{p}.
\ee}
Finally, noticing that
\be
1+T (p) = \int_0^1\dd \beta \cos{(\beta p)},\quad\quad U(p) = \ii\int_0^1\dd \beta \sin{(\beta p)},
\ee
we get a simple answer for the undetermined function in Eq.~\eqref{conjecture}. 
 {\be
F(p)= 1+T(p) + U(p)  = 
\int_0^{1} \dd \beta\, e^{\ii \beta p}  = \frac{e^{\ii p} -1}{\ii \, p} . \la{Fdef}
\ee}
Now we demonstrate  that, in the  $\mathfrak{su}(2)$-case, there is a simple connection between $\gamma$ and $\rho$.
Calculating the inverse of the expression~\eqref{Fdef}, and comparing with Eq.~\eqref{Gdef}, one can easily establish the following relation between the functions $F$ and $G$,
\be
\frac{1}{F(p)} = \frac{\ii \, p}{e^{\ii p} -1} =  -\frac{\ii \, p}{2} + \frac{p}{2} \cot{\frac{p}{2}} = -\ii\, p + G(p),
\ee
which implies an intriguing connection~\eqref{irl} between the matrices $\gamma$ and $\rho$, announced in the Introduction.
Below we shall prove that this relation holds for an \emph{arbitrary} linear Poisson structure, which,  for $\gamma$ given by \eqn{gensolu1}, is equivalent to prove the conjecture~\eqref{conjecture}. 

\subsection{Universality of the  solution for $\rho$}
Since $\rho$ is non degenerate by hypothesis, instead of Eq.~\eqref{secondlin} we can equivalently write,  for $\rho^{-1}$,
 {\be
\gamma^k_i \partial_A^i\big[ \rho^{-1}\big]^j_l-\big[ \rho^{-1}\big]^j_i \partial_A^i\gamma_l^k=0\,.\label{Third}
\ee}
On using ~\eqref{irl}   and setting $\gamma$ equal to the universal solution~\eqref{gensolu1} we obtain 
\bea
\gamma^k_i \partial_A^i\big[ \rho^{-1}\big]^j_l-\big[ \rho^{-1}\big]^j_i \partial_A^i\gamma_l^k 
 &=& \gamma^{k}_i\partial_A^i\gamma_l^j - \gamma_i^j\partial_A^i\gamma_l^k - \ii\gamma_i^k\partial_A^i\hat{A}^j_l + \ii \hat{A}_i^j\partial_A^i\gamma_l^k \nonumber\\
 &=&   - \big[ {G}(\hat{A})\big]_i^k\, f_l^{ij} - \big[{G}(\hat{A})\big]_l^if_i^{jk} + \ii \,\hat{A}^j_i \,\partial_{A}^i \,\big[{G}(\hat{A})\big]_l^k\la{intermid}
\eea
where   Eq.~\eqref{firstlin} has been used. 
But this can be seen to be equal to zero, due to the fact that Eq.~\eqref{Gdef} defines an analytic function at a point $p=0$, together with the following result: 
\begin{proposition}
For any \emph{arbitrary}  function $S(p)$,   which is analytic at $p=0$, it holds:
\\
 {\be
\ii \,\hat{A}^j_i \,\partial_{A}^i \,\big[S(\hat{A})\big]_l^k  = \big[ S(\hat{A})\big]_i^k\, f_l^{ij} + \big[S(\hat{A})\big]_l^if_i^{jk}. \la{genid} 
\ee}
\end{proposition}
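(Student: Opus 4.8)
The plan is to prove the identity first for monomials $S(p) = p^N$, $N \geq 0$, and then extend to arbitrary analytic $S$ by linearity and by the convergence of power series (which is legitimate since $\hat A$ is a fixed finite matrix and both sides depend polynomially, hence continuously, on the coefficients of the series). So the whole problem reduces to establishing, for every $N \in \mathbb{Z}_+$,
\be
\ii\,\hat A^j_i\,\partial_A^i\,\big[\hat A^N\big]^k_l = \big[\hat A^N\big]^k_i\,f^{ij}_l + \big[\hat A^N\big]^i_l\,f^{jk}_i. \la{monomialid}
\ee

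The key computational input is the action of $\hat A^j_i\,\partial_A^i$ on a single factor of $\hat A$. Since $[\hat A]^b_c = -\ii f^{ab}_c A_a$ is linear in $A$, we have $\partial_A^i [\hat A]^b_c = -\ii f^{ib}_c$, so $\hat A^j_i\,\partial_A^i[\hat A]^b_c = -\ii\,\hat A^j_i\,f^{ib}_c = -[\hat A^2]^{\,??}$... more precisely $\hat A^j_i f^{ib}_c = \ii \hat A^j_i [\mathfrak e^b]^i{}_{\!}$... — here one must be careful with index placement, but the upshot is that $\ii\,\hat A^j_i\,\partial_A^i$ acting on one $\hat A$ factor produces a commutator-type expression governed by the structure constants, which is exactly the Jacobi identity \eqref{Jacobi} in disguise. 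Concretely, I expect the clean statement to be: the operator $\mathcal{L} := \ii\,\hat A^j_i\,\partial_A^i$ is a derivation on matrix products of $\hat A$'s, and on a single $\hat A$ it acts as $\mathcal{L}(\hat A)^k_l = [\hat A^2]$-combination reproducing precisely $(\hat A)^k_i f^{ij}_l + (\hat A)^i_l f^{jk}_i$ with $N=1$. Then, since $\mathcal L$ is a derivation and $\partial_A^i$ commutes with itself, distributing $\mathcal L$ across $\hat A^N = \hat A \cdots \hat A$ ($N$ factors) and using the $N=1$ identity on each factor, the "internal" terms telescope/cancel by the Jacobi identity, leaving only the two boundary contributions $\big[\hat A^N\big]^k_i f^{ij}_l$ (from the leftmost factor) and $\big[\hat A^N\big]^i_l f^{jk}_i$ (from the rightmost factor). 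An induction on $N$ makes this rigorous: assuming \eqref{monomialid} for $N$, write $\hat A^{N+1} = \hat A^N \hat A$, apply the Leibniz rule for $\mathcal L$, use the inductive hypothesis on the $\hat A^N$ piece and the base case on the $\hat A$ piece, and check that the cross terms assemble (via Jacobi) into the correct boundary form for $N+1$.

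The main obstacle I anticipate is purely bookkeeping: getting all the index contractions and the placement of the free indices $j,k,l$ consistent between the "derivation acting on a product" picture and the explicit structure-constant expressions, especially since the extra free index $j$ sits asymmetrically (it is contracted into $\hat A^j_i$ on the left-hand side but appears as an upper index on $f^{ij}_l$ and $f^{jk}_i$ on the right). It will help to rephrase everything in terms of the adjoint generators $\mathfrak e^a$ with $[\mathfrak e^a]^b_c = -\ii f^{ab}_c$ and the relation $[\mathfrak e^a,\mathfrak e^b] = \ii f^{ab}_c \mathfrak e^c$, so that $\hat A = A_a \mathfrak e^a$ and the identity \eqref{genid} becomes a statement about $[\hat A, S(\hat A)]$-type expressions together with $\mathfrak e^j$; in that language the Jacobi identity is automatic from associativity of matrix multiplication and the whole thing should collapse to a short manipulation. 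I would therefore do the reformulation first, prove the base case $N=1$ as an immediate consequence of $[\mathfrak e^j, \hat A] = \ii f^{ja}_b A_a \mathfrak e^b$ (equivalently Jacobi), then run the induction, and only at the end translate back to the index notation used in the statement.
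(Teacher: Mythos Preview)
Your proposal is correct and follows essentially the same route as the paper: reduce to monomials via the Taylor expansion, then prove \eqref{monomialid} by induction on $N$, using that $\ii\,\hat A^j_i\,\partial_A^i$ acts as a derivation on the matrix product $\hat A^{N}=\hat A^{N-1}\hat A$ and invoking the Jacobi identity to collapse the cross terms. The only minor difference is that the paper takes $N=0$ as the base case (where the identity reduces to $0=f^{kj}_l+f^{jk}_l$, i.e.\ mere antisymmetry of the structure constants) rather than your $N=1$; this makes the induction start slightly cleaner, and the Jacobi identity then enters only in the inductive step, exactly where your ``cross terms assemble'' remark anticipates it.
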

\begin{proof} Expanding $S$ in Taylor series, we see that Eq.~\eqref{genid} is equivalent to the following relations:
\be
\ii \,\hat{A}^j_i \,\partial_{A}^i \,\big[\hat{A}^n\big]_l^k  = \big[ \hat{A}^n\big]_i^k\, f_l^{ij} + \big[\hat{A}^n\big]_l^if_i^{jk},\quad\quad \forall n\in\mathbb{Z}_+ \la{genidMono} 
\ee
which can be verified  by induction. At $n = 0$ this equation becomes:
\be
 0 =  \delta_i^k \,f_l^{ij} + \delta_l^i\, f_i^{jk},
\ee 
what is obviously true, since $f_{l}^{kj} = -f_l^{jk}$. Now, assuming  ~\eqref{genidMono} to be true for $n-1$, namely that
\be
\ii \,\hat{A}^j_i \,\partial_{A}^i \,\big[\hat{A}^{n-1}\big]_{s}^k  = \big[ \hat{A}^{n-1}\big]_i^k\, f_{s}^{ij} + \big[\hat{A}^{n-1}\big]_{s}^if_i^{jk},\quad\quad n\in\mathbb{N}. \la{genidMonoNm1}
\ee
we write the the LHS of~\eqref{genidMono} at $n$ as follows
\bea
\ii \,\hat{A}^j_i \,\partial_{A}^i \,\big[\hat{A}^n\big]_l^k &=& \ii \,\hat{A}^j_i \,\partial_{A}^i \,\big(\big[\hat{A}^{n-1}\big]_s^k\cdot \hat{A} _l^s\big) \nonumber\\
&=& \ii \,\hat{A}^j_i \,\partial_{A}^i \big[\hat{A}^{n-1}\big]_s^k\cdot \hat{A}_l^s + \ii \,\hat{A}^j_i \,\big[\hat{A}^{n-1}\big]_s^k\cdot \partial_{A}^i \hat{A}_l^s \nonumber\\
&=&\big[ \hat{A}^{n-1}\big]_i^k\, f_{s}^{ij}\cdot \hat{A}_l^s + \big[\hat{A}^{n-1}\big]_{s}^if_i^{jk}\cdot \hat{A}_l^s +\hat{A}_i^j\,f_{l}^{is}\,\big[\hat{A}^{n-1}\big]_s^k, \la{interm1}
\eea
where  the assumption~\eqref{genidMonoNm1} has been used. 
The second term in the last line of Eq.~\eqref{interm1} reproduces the second term of the RHS is  ~\eqref{genidMono}:
whilst the remaining terms can be rewritten as follows
\bea
\big[ \hat{A}^{n-1}\big]_i^k\, f_{s}^{ij}\cdot \hat{A}_l^s +\hat{A}_i^j\,f_{l}^{is}\,\big[\hat{A}^{n-1}\big]_s^k &=&\big[\hat{A}^{n-1}\big]^k_s\,\big(f_i^{sj}\,\hat{A}^i_l + f^{is}_l\,\hat{A}^j_i\big) \nonumber\\
&=& \big[\hat{A}^{n-1}\big]^k_s\,\hat{A}^s_i \,f_l^{ij}+ \big[\hat{A}^{n-1}\big]^k_s\,\big(f_i^{sj}\,\hat{A}^i_l + f^{is}_l\,\hat{A}^j_i - \hat{A}^s_i f_l^{ij}\big) \nonumber\\
&=&  \big[\hat{A}^{n}\big]^k_i \,f_l^{ij} -\ii\, A_{r}\,\big[\hat{A}^{n-1}\big]^k_s\underbrace{\big(f_l^{ri}\,f_i^{sj} +f_l^{si}\,f_i^{jr} +f_l^{ji}\,f_i^{rs}\big)}_0 \nonumber\\
&=&\big[\hat{A}^{n}\big]^k_i \,f_l^{ij},
\eea
where use has been made of the relation $\hat{A}^{k}_l = -\ii f^{jk}_l A_j $. This is the the first term in the RHS of~\eqref{genidMono}, thus proving Eq. \eqn{genid} by induction.
\end{proof} 
Therefore, we can  conclude that Eq. \eqref{conjecture}, or equivalently \eqn{irl},  hold true for any linear Poisson bracket.

{We have therefore found the universal solutions for both the matrices $\gamma$ and $\rho$ in terms of  matrix-valued functions. In the following two sections we illustrate on a few nontrivial examples  how these matrix-valued functions can be calculated explicitly. }

\section{ A family of four-dimensional spaces with commutative time}\label{cases}

As an application    {of the general formulae obtained in the previous Section} we consider here a family of four dimensional models with  three-dimensional non commutativity, introduced in~\cite{Kurkov:2021kxa}. We use the Greek letters $\mu$, $\nu$, ..., and the Latin letters $a$, $b$, $c$, ...,  to denote respectively the four-dimensional and the three-dimensional (i.e. the spatial) coordinates.  
The two-parameter family of  Poisson structures to be considered reads:
 {\be
\Theta^{0\mu} = 0 = \Theta^{\mu 0}, \quad \Theta^{jk} =- \lambda\, \varepsilon^{jks}\,\check\beta_{s l} \,x^l, \la{Ps}
\ee}
where the $3\times 3$ matrix $\check{\beta}$ is given by:
 {\be
\check{\beta} := \mathrm{diag}\,\{1,\,1,\,\beta \}, \quad \beta\in\mathbb{R}.
\ee}
At $\beta = 0$ we get the Poisson structure which corresponds to  angular noncommutativity~\cite{angular1, angular2, angular3, angular4, angular5, angular6, angular7}, 
  at $\beta = 1$ the three-dimensional bivector $\Theta^{jk}$ is nothing but the Poisson structure of the $\mathfrak{su}(2)$ case~\cite{hammou},
and $\beta= -1$  corresponds to the Lie algebra  $\mathfrak{su}(1,1)$. The structure constants read
\be
 f^{\mu \nu}_{\rho}  = - \lambda\, \boldsymbol{\delta}^{\mu}_{j}\,\boldsymbol{\delta}^{\nu}_{k}\,\boldsymbol{\delta}^{l}_{\rho}\,\varepsilon^{j k s}\,\check\beta_{s l}, \la{ourf}
\ee
where 
\be
\boldsymbol{\delta}_{\mu}^{\nu} := \delta_{\mu}^{\nu} - \delta_{\mu}^0\delta^{\nu}_0,
\ee
is a projector on the three-dimensional space. The four by four matrix $\hat{A}$ is given by:
\be
\hat{A}^0_0 = \hat{A}^0_j = \hat{A}^j_0 = 0, \quad \hat{A}^k_l =\ii \lambda\varepsilon^{ksj} {\check\beta}_{sl}A_j.
\ee
On introducing the operator
\be \la{Znotations}
\big[\hat{M}_{\boldsymbol{\beta}}\big]_{\mu}^{\nu} =  {\delta}^{\nu}_{\mu} - \frac{\boldsymbol{A}_{\boldsymbol{\beta}}^{\nu} \boldsymbol{A}_{\mu}}{Z_{\boldsymbol{\beta}}},\;\;\;\;\;\;\;\; Z_{\boldsymbol{\beta}} = A_{\mu} \boldsymbol{A}_{\boldsymbol{\beta}}^{\mu} =  \beta\cdot \left(A_1\right)^2 + \beta\cdot \left(A_2\right)^2 +  \left(A_3\right)^2, 
\ee
with 
\be
\boldsymbol{A}_{\mu}  := \boldsymbol{\delta}_{\mu}^{\nu} A_{\nu}, \;\;\;\;\;\;
\boldsymbol{A}_{\boldsymbol{\beta}}^{\mu} := \hat{\beta}^{\mu \xi} A_{\xi} , \quad
\hat{\beta} := {\mathrm{diag}\{0, \beta,\,\beta,\, 1\}}.
\ee
one can easily check that 
\be
\big[\hat{A}^2\big]_{\mu}^{\nu} = \lambda^2\, Z_{\boldsymbol{\beta}}\,\big[\hat{M}_{\boldsymbol{\beta}}\big]_{\mu}^{\nu}.
\ee

The matrix $\hat{M}_{\boldsymbol{\beta}}$
is a projector, i.e. $\hat{M}_{\boldsymbol{\beta}}^2 = \hat{M}_{\boldsymbol{\beta}}$, and hence,
\be
\hat{M}_{\boldsymbol{\beta}}^n = \hat{M}_{\boldsymbol{\beta}}, \quad\forall n\in\mathbb{N}. \la{propro}
\ee
It is worth noting that at $\beta = 0$, the quantities $Z_0$ and $ \hat{M}_{\boldsymbol{0}}$ coincide with $z$ and $ \hat{M}$ introduced in Sec.~\ref{candidate}.
Using the property~\eqref{propro}, together with $\hat M \cdot \hat A= \hat A$, one can easily check that
\bea
\hat{A}^{2n} &=& \big(\lambda^2 Z_{\boldsymbol{\beta}}\big)^n \hat{M}_{\boldsymbol{\beta}},\quad \quad n\in\mathbb{N}, \nonumber\\
\hat{A}^{2n+1} &=& \big(\lambda^2 Z_{\boldsymbol{\beta}}\big)^n \hat{A},\quad \quad n\in\mathbb{N}.
\eea
Therefore, for any even function $S_{\mathrm{even}}(p)$, which is analytic at $p=0$, and which vanishes at $p=0$
\be
S_{\mathrm{even}}(\hat{A}) = S_{\mathrm{even}}\big(\lambda \sqrt{Z_{\boldsymbol{\beta}}} \,\big) \cdot\hat{M}_{\boldsymbol{\beta}}, \la{evenrule}
\ee 
and for any odd  function $S_{\mathrm{odd}}(p)$, which is analytic at $p=0$,
\be
S_{\mathrm{odd}}(\hat{A}) = \frac{S_{\mathrm{odd}}\big(\lambda \sqrt{Z_{\boldsymbol{\beta}}}\,\big)}{\lambda \sqrt{Z_{\boldsymbol{\beta}}}}\cdot \hat{A}. \la{oddrule}
\ee
Thus, for  $S_{\mathrm{odd}}(p) = \ii p/2$ and $S_{\mathrm{even}}(p) = (p/2)\cot{(p/2)} -1$, as in    Eq. ~\eqref{Gdef},
we get
\be
G(\hat{A}) = \frac{\ii}{2} \cdot\hat{A} + \mathbb{1}+
\big((\lambda \sqrt{Z_{\boldsymbol{\beta}}}/ 2)\cdot\cot{(\lambda \sqrt{Z_{\boldsymbol{\beta}}}/2 )} -1\big)\cdot\hat{M}_{\boldsymbol{\beta}},
\ee
therefore  
 {\be
\gamma_{\mu}^{\nu}(A) =  \delta_{\mu}^{\nu} 
-\frac{1}{2}\, f_{\mu}^{ \nu\xi}\cdot A_{\xi} + 
\frac{1}{ Z_{\boldsymbol{\beta}} }\cdot \big((\lambda \sqrt{Z_{\boldsymbol{\beta}}}/ 2)\cdot\cot{(\lambda \sqrt{Z_{\boldsymbol{\beta}}}/2 )} -1\big)\cdot
(
\boldsymbol{\delta}^{\nu}_{\mu}\, Z_{\boldsymbol{\beta}} - \boldsymbol{A}_{\boldsymbol{\beta}}^{\nu} \boldsymbol{A}_{\mu}
), \la{gammasolANproj}   
\ee}
where, we remind, the structure constants are defined by Eq.~\eqref{ourf}.
At $\beta = 1$ and $\lambda = -2\alpha$ the three-dimensional part $\gamma_i^j$ reproduces the result  for the $\mathfrak{su}(2)$-case, derived in~\cite{Kupriyanov:2020sgx}. The calculation, in a slightly different form has been presented in~\cite{Kurkov:2021kxa}.

Applying Eqs.~\eqref{evenrule} and~\eqref{oddrule} to $S_{\mathrm{odd}}(p) = 2\ii\,(\sin^2p/2)/p $ and $S_{\mathrm{even}}(p)  = \sin{(p)}/p - 1$, 
and using the definition~\eqref{Fdef}, we arrive at
\be
F(\hat{A}) = \mathbb{1} 
+ 2\ii\cdot\frac{\sin^2(\lambda \sqrt{Z_{\boldsymbol{\beta}}} /2)}{\lambda \sqrt{Z_{\boldsymbol{\beta}}} } \cdot \hat{A}
+ \Bigg( 
\frac{\sin{(\lambda \sqrt{Z_{\boldsymbol{\beta}}})}  }{\lambda \sqrt{Z_{\boldsymbol{\beta}}}   }- 1
\Bigg) \cdot \hat{M}_{\boldsymbol{\beta}},
\ee
therefore
 {\be
\rho_{\mu}^{\nu}(A)  = \delta_{\mu}^{\nu}
- \frac{2\,\sin^2(\lambda \sqrt{Z_{\boldsymbol{\beta}}} /2)}{\lambda^2 Z_{\boldsymbol{\beta}}    }  \, f_{\mu}^{ \nu\xi}\, A_{\xi} 
+\frac{1}{ Z_{\boldsymbol{\beta}} }\cdot \Bigg( 
\frac{\sin{(\lambda \sqrt{Z_{\boldsymbol{\beta}}})}  }{\lambda \sqrt{Z_{\boldsymbol{\beta}}}   } - 1
\Bigg) \cdot
(
\boldsymbol{\delta}^{\nu}_{\mu}\, Z_{\boldsymbol{\beta}} - \boldsymbol{A}_{\boldsymbol{\beta}}^{\nu} \boldsymbol{A}_{\mu}
). \la{rhosolANproj}
\ee}
At $\beta = 1$ and $\lambda = -2\alpha$ the three-dimensional part $\rho_i^j$ reproduces the known formula~\eqref{su2soluRho} for the $\mathfrak{su}(2)$-case, derived in~\cite{Kup33}.

\subsection{Comments on the presence of commutative coordinates}\label{comments}
The previous example has a peculiar feature - the presence of the fourth commutative coordinate, which we associate with time, whilst the noncommutativity is essentially three-dimensional. 
The $4\times 4$ matrices $\rho_{\mu}^{\nu}$ and $\gamma_{\mu}^{\nu}$ are given by a  {simple} generalisation of the corresponding three-dimensional results, since
\be
\gamma^{\mu 0}  =  \gamma^{ 0\mu}= \delta^{\mu {0}},\quad \rho^{\mu 0} = \rho^{0\mu} = \delta^{\mu{0}}, \la{gr4}
\ee
whilst the $3\times 3$  $\gamma^{i}_j$ and $\rho_{j}^i$ solve the three-dimensional master equations.
On the other hand, it has been shown in~\cite{Kurkov:2021kxa}, that the deformed field strength $\mathcal{F}_{\mu\nu}$ exhibits a highly nontrivial behaviour in the four-dimensional case, namely it is not a simple generalisation of  the three dimensional one.  In particular the components $\mathcal{F}_{0 j}$ are nonlinear in the gauge potential 
$A$, even in the simplest case of a spatially homogeneous situation, i.e. when $A$  dose not depend on the spatial coordinates $x^{j}$.

 {Below we demonstrate that the simple behaviour~\eqref{gr4} of the universal solution, indeed  implies the nontrivial behaviour of $\mathcal{F}_{0j}$, mentioned above. }  
We have
{\be
\mathcal{F}_{ab}= (\rho_a^m \rho_b^n-\rho_a^n \rho_b^m)\gamma_m^l \del_l A_m + \frac{1}{2}(\rho_a^m \rho_b^n-\rho_a^n \rho_b^m)\{A_m, A_n\}
\ee
therefore, for $A$ independent on spatial coordinates, we find
\bea
\mathcal{F}_{0j}&=& (\delta_0^m \rho_j^n-\delta_0^n \rho_j^m)\gamma_m^l \del_l A_m + \frac{1}{2}(\delta_0^m \rho_j^n-\delta_0^n \rho_j^m)\{A_m, A_n\}\nn\\
&=&\delta_0^l \rho^n_j\del_l A_n- \gamma_j^l\del_l A_0 + \{A_0,A_j\}=\rho^n_j\del_0A_n,
\eea
}
namely, the electrical component of the field strength is non-trivially modified. 
Thus, by adding commutative coordinates, it is certainly true that  $\gamma$ and $\rho$ are given by  a trivial generalization of the corresponding lower dimensional results. But this is definitely an {intermediate} stage. The field strength,  which is the true dynamical object, gets instead non-trivial contributions, already in the simple hypothesis of spatial homogeneity. These conclusions go beyond the example and in general will  apply to  any non-commutative spacetime with commutative directions. 

\section{$\kappa$-Minkowski space-time}\label{kappa} 
Another important application of the results of section \ref{Poili} is  the $\kappa$-Minkowski case in $N$ dimensions \cite{kappa1,kappa2,kappa3,kappa4,kappa5,kappa6},  where the Poisson bivector is given by 
\be
\Theta^{ij} = 2(\omega^ix^j - \omega^j x^i),
\ee
where $\omega^i$, $i=1,...,N$ are deformation parameters.
Substituting the outcoming structure constants
\be
f^{aj}_k  = 2(\omega^a\delta^j_k - \omega^j\delta^a_k),
\ee
and taking into account  Eq.~\eqref{gensolu1} and Eq.~\eqref{commrel}, we obtain:
\be
\hat{A}_k^j = -\ii f_{k}^{aj}A_a = -2\ii \,(\omega\cdot A)\, \hat{P}^j_k,
\ee
where, by definition,
\be
\omega\cdot A = \omega^jA_j,
\ee
and
\be
\hat{P}^j_k = \delta^j_k - (\omega\cdot A)^{-1} \cdot\omega^j A_k.
\ee
The matrix $\hat{P}$ is a projector, i.e. $\hat{P}^2 = \hat{P}$.
Therefore, using Eq.~\eqref{gensolu1} and Eq.~\eqref{conjecture}, we can immediately calculate the matrices $\gamma$ and $\rho$.
For this purpose, we represent the functions $G$ and $F$, defined by Eq.~\eqref{Gdef} and Eq.~\eqref{Fdef} as follows
\bea
G(p) = 1 + \tilde{G}(p), \nonumber\\
F(p) = 1+ \tilde{F}(p),
\eea
where the functions $\tilde{G}(p)$ and $\tilde{F}(p)$ are analytic and vanishing at  $p=0$. For any such function, say $S$, 
the following matrix identity can be easily checked:
\be
S(\lambda\hat{P}) = S(\lambda)\hat{P},
\ee
with $\lambda$ any complex number. Therefore,
 {\be
\gamma^{i}_j (A)= \delta^i_j + \tilde{G}\big(-2\ii\,(\omega\cdot A)\big)\hat{P}^i_j = \delta^{i}_j + \big((\omega\cdot A) +  (\omega\cdot A) \coth{(\omega\cdot A)} - 1\big)\hat{P}^i_j 
 \ee}
and
 {\be
\rho_j^i(A)  = \delta_j^i + \tilde{F}(-2\ii\,(\omega\cdot A))\hat{P}^i_j = \delta^i_j + \frac{1}{2}\,(\omega\cdot A)^{-1}\big(e^{2(\omega\cdot A)}-1-2(\omega\cdot A)\big)\hat{P}^i_j.
\ee}
{Restoring the original notations, we get our final expressions for the matrices $\gamma$ and $\rho$ in the form:
\bea
\gamma^{i}_j (A) &=&  (\omega\cdot A) \left[1+   \coth{(\omega\cdot A)} \right]\delta^i_j  +\frac{1-(\omega\cdot A)-(\omega\cdot A)\coth{(\omega\cdot A)}}{\omega\cdot A}\,\omega^i A_j,\nonumber\\
 \rho_j^i(A)   &=& \frac{e^{2(\omega\cdot A)}-1}{2(\omega\cdot A)}\, \,\delta^i_j + \frac{1+2(\omega\cdot A)-e^{2(\omega\cdot A)}}{2(\omega\cdot A)^2}\,\,\omega^i A_j. \la{gammarhokappa}
 \eea
These results are different from the ones obtained in~\cite{Kup33,KKV}. In the next section we address the non-uniqueness of the construction. In particular, we will show that the new and the old expressions for $\gamma$ and $\rho$ are related via the Seiberg-Witten map.  
}

\section{Arbitrariness of the solutions and the Seiberg-Witten map}\label{seiwit}
{It has already been noticed that solutions of the master equations~\eqref{master1} and~\eqref{master2}, defining the Poisson gauge field theoretical model, are not unique.  It is easy to see that for any invertible field redefinition 
\be
A\to\tilde A(A), \la{redef}
\ee
 the quantities 
\be
\tilde{\gamma}^{i}_j (\tilde{A}) = \Bigg(\gamma_k^i(A)\cdot \frac{\partial \tilde{A}_j}{\partial A_k}\Bigg)\Bigg|_{A = A(\tilde{A})}, \la{gaSW}
\ee
and
\be
\tilde{\rho}_a^i(\tilde{A})  = \Bigg(\frac{\partial A_s}{\partial \tilde{A}_i}\cdot \rho_a^s(A)\Bigg)\Bigg|_{A = A(\tilde{A})}, \la{rhoSW}
\ee
are again  solutions of the master equations. Therefore, 
one may construct another gauge transformation corresponding to  $\tilde\gamma^i_a$,
\begin{equation}\label{fr1}
\tilde\delta_f\tilde A_a=\tilde\gamma^i_a(\tilde A)\,\partial_if+   {\{\tilde A_a,f\}}\,,
\end{equation}
which will close the same gauge algebra (\ref{ga}). Upon the field redefinition~\eqref{redef} the gauge orbits of the original fields and the gauge orbits of the new fields are mapped onto each other. Indeed, the Seiberg-Witten condition~\cite{Seiberg:1999vs},
\begin{equation}
\tilde A\left(A+\delta_ fA\right)=\tilde A(A)+\tilde\delta_f\tilde A(A)\,,
\end{equation}
is trivially satisfied up to the linear order in $f$.  Therefore the invertible field redefinitions are nothing but the Seiberg-Witten maps. 

 As we have seen above,  Poisson gauge models are actually based on  symplectic embeddings. The ambiguity, discussed above, corresponds to the freedom in choosing different symplectic embeddings for the Poisson manifolds~\cite{KS21,Kup33}. Finally, we recall that, in the equivalent approach in terms of L$_\infty$ bootstrap, it was shown in \cite{BBKT}  that  Seiberg-Witten maps for the Poisson gauge algebra under analysis  correspond in that picture to L$_\infty$-quasi-isomorphisms which describe the arbitrariness in the definition of the related L$_\infty$ algebra.
In what follows, we derive the Seiberg-Witten map for some noteworthy cases with linear non-commutativity.
}

\subsection{{The  $\kappa$-Minkowski case}}
{ One can check by a straightforward calculation that the solutions of the master equations for the $\kappa$-Minkowski non-commutativity, presented in~\cite{Kup33,KKV}, viz  
\begin{equation}
\tilde\gamma^k_a(A)= \left[\sqrt{1+(\omega\cdot A)^2}+(\omega\cdot A)\right]\delta^k_a -\omega^k\,A_a\,.\label{gammakappa2}
\end{equation}
and
\be
\tilde\rho^k_a(A) =\left[\sqrt{1+(\omega\cdot A)^2}+(\omega\cdot A)\right]\,\delta_a^k -\frac{\sqrt{1+(\omega\cdot A)^2}+(\omega\cdot A)}{\sqrt{1+(\omega\cdot A)^2}}\,\omega^k A_a,
\ee
can be obtained from the new ones~\eqref{gammarhokappa},  using the formulae~\eqref{gaSW} and~\eqref{rhoSW}. The new fields are defined in the following way
\begin{equation}
\tilde A_a=\frac{\sinh (\omega\cdot A)}{\omega\cdot A} \,A_a\,. \la{AtransKappa}
\end{equation}
Such a Seiberg-Witten map is invertible, and the inverse transformation reads: 
\begin{eqnarray}
 A_a=\frac{\mathrm{arcsinh}\,(\omega\cdot \tilde{A})}{\omega\cdot \tilde{A}} \,\tilde A_a\,.
\end{eqnarray}
}

{\subsection{The $\mathfrak{su}(2)$ case}
As we mentioned above, setting $\boldsymbol{\beta} = 1$ and $\lambda = -2\alpha$ in the three-dimensional component  of Eq.~\eqref{gammasolANproj} and Eq.~\eqref{rhosolANproj}, we get the ``standard" solution\footnote{Applying the universal formulas~\eqref{gensolu1} and~\eqref{conjecture}, \eqref{Fdef}, one gets exactly these matrices $\gamma$ and $\rho$.
}  of the master equations~\eqref{first} and~\eqref{second} for the $\mathfrak{su}(2)$-case: 
\bea
\gamma_{k}^{i}(A) &=&  \delta_{k}^{i} 
- \alpha\, \varepsilon^{il}_{~~k}A_l+ 
\frac{1}{ Z_{1} }\big(\alpha \sqrt{Z_{1}}\,\cot{(\alpha \sqrt{Z_{1}} )} -1\big)
(
{\delta}^{i}_{k}\, Z_{1} - A^{i} A_{k}
) \nonumber\\
\rho^i_a(A) &=& \delta^i_a - \alpha\, \varepsilon^{ik}_{~~a} \,A_k\, \frac{1}{Z_1}\sin^2{ \sqrt{Z_1} }
+\frac{1}{ Z_{1} }\Bigg( 
\frac{\sin{(2\,\alpha \sqrt{Z_{1} })}  }{2\,\alpha \sqrt{Z_{1 }  }   } - 1
\Bigg) 
(
{\delta}^{i}_{a}\, Z_{1} - A^{i}A_{a}
) 
\eea
which have been previously presented in~\cite{Kup33} and~\cite{Kupriyanov:2020sgx}. We remind that the quantity 
$Z_1$ is given by
\be
Z_{1} :=  \left(A_1\right)^2 +  \left(A_2\right)^2 +  \left(A_3\right)^2,
\ee
see Eq.~\eqref{Znotations} at $\boldsymbol{\beta}= 1$.

An alternative solution  of the first master equation~\eqref{first} can be obtained using the results of~\cite{Gubitosi:2021itz}
in the following form:
\be
\tilde{\gamma}^i_k(A) = \delta_k^i  - \alpha\, \varepsilon^{il}_{~~k}A_l  + \alpha^2 A^i A_k.
\ee
It is easily checked that the ``standard" and the new solution are related via the field redefinition\footnote{Note that in \cite{Smilga} this map was interpreted as a coordinate transformation relating two different representation of a sphere $S^3$.}
\be
 \tilde{A}_i(A)  := A_i \cdot \frac{\tan{\big(\alpha\sqrt{Z_1}\big)}}{\alpha\sqrt{Z_1}} \la{AtransSU2}
\ee
according to the relation~\eqref{gaSW}.
The inverse field redefinition reads:
\be
 A_j(\tilde{A}) = \tilde{A}\cdot\frac{\arctan{\big(\alpha\sqrt{\tilde{Z}_1}\big)}}{ \alpha\sqrt{\tilde{Z}_1}},
\ee
where we set by definition
\be
\tilde{Z}_{1} :=  \left(\tilde{A}_1\right)^2 +  \left(\tilde{A}_2\right)^2 +  \left(\tilde{A}_3\right)^2.
\ee
Using these transformations {together with the relation~\eqref{rhoSW}}, one can easily construct the new solution for the second master equation~\eqref{second}
\be
\tilde{\rho}^i_a(A) = \frac{1}{1+ \alpha^2\,Z_1}\cdot \big(\delta_a^i - \alpha\,\varepsilon^{ik~}_{~~a}A_k\big).   \la{rhoNewSU2}
\ee

\subsection{The $\lambda$-Minkowski case} 
For the sake of simplicity we restrict ourselves to the three-dimensional $\lambda$-Minkowski case. Indeed, since the fourth coordinate is commutative, the 
four-dimensional generalisation of $\gamma$ and $\rho$ is trivial, see the discussion in Sect. \ref{comments}.  
Setting $\boldsymbol{\beta} = 0$ in the three-dimensional parts of Eq.~\eqref{gammasolANproj} and Eq.~\eqref{rhosolANproj}, we arrive at the following ``standard" solutions of the master equations:
\be
 \gamma(A) = \left( \begin {array}{ccc} \frac{A_{{3}}\lambda}{2}\cot \left( \frac{A_{{3}}\lambda}{2} \right)  & -\frac{A_{{3}}\lambda}{2}  & 0\\ 
\noalign{\medskip}\frac{A_{{3}}\lambda}{2}                  &         \frac{A_{{3}}\lambda}{2}\cot \left( \frac{A_{{3}}\lambda}{2} \right)      &0\\ 
\noalign{\medskip} \frac {-\cot \left( \frac{A_{{3}}\lambda}{2}  \right) \lambda\,A_{{1}}-\lambda\,A_{{2}} +\frac{2\,A_{{1}}}{A_{{3}}} }{2 }
&  \frac {-\cot \left( \frac{A_{{3}}\lambda}{2}  \right) \lambda\,A_{{2}}+\lambda\,A_{{1}} +\frac{2\,A_{{2}}}{A_{{3}}} }{  2 }     &1
\end {array} \right), 
\ee
and
\be
\rho(A) = \left( \begin {array}{ccc} \frac {\sin \left( A_{{3}}\lambda\right) }{A_{{3}}\lambda}  & -{\frac { 2\left( \sin \left( \frac{A_{{3}}\lambda }{2}\right)  \right) ^{2}}{ A_{{3}}\lambda}} & 0\\ 
\noalign{\medskip} { {\frac { 2\left( \sin \left( \frac{A_{{3}}\lambda }{2}\right)  \right) ^{2}}{ A_{{3}}\lambda}}  } & \frac {\sin \left( A_{{3}}\lambda\right) }{A_{{3}}\lambda} & 0\\ 
\noalign{\medskip}{ \frac{-\frac{2A_{{2}}  \left( \sin \left( \frac{A_{{3}}\lambda}{2} \right)  \right) ^{2} }{A_{{3}}\lambda}+ \left( 1- \frac {\sin \left( A_{{3}}\lambda\right) }{A_{{3}}\lambda}  \right) A_{{1}} }{{A_{{3}} }} }&
\frac{\frac{2A_{{1}} \left( \sin \left( \frac{A_{{3}}\lambda }{2}\right)  \right) ^{2} }{A_{{3}}\lambda}+ \left( 1- \frac {\sin \left( A_{{3}}\lambda\right) }{A_{{3}}\lambda} \right) A_{{2}}}{{A_{{3}}}}&
1\end {array} \right).
\ee
The results of ~\cite{Gubitosi:2021itz} suggest a new (much simpler!) solution for the first master equation, that is:
\be
\tilde{\gamma}(A) = \left( \begin {array}{ccc} 1&0&0\\ \noalign{\medskip}0&1&0
\\ \noalign{\medskip}-\theta\,A_{{2}}&\theta\,A_{{1}}&1\end {array}
 \right).   \la{gammaNewLambda}
\ee
One can show  that the two gamma matrices  are related through the relation ~\eqref{gaSW}, where the new fields are defined as follows:
\bea
\tilde{A}_1 &=& \frac {\sin \left( A_{{3}}\lambda\right) }{A_{{3}}\lambda}\, A_1
 - \frac{2  \sin^2 \left( \frac{A_{{3}}\lambda }{2}\right)   }{A_{{3}}\lambda}\, A_2, \nonumber\\
 \tilde{A}_2 &=& \frac {\sin \left( A_{{3}}\lambda\right) }{A_{{3}}\lambda}\, A_2
 + \frac{2  \sin^2 \left( \frac{A_{{3}}\lambda }{2}\right) }{A_{{3}}\lambda}\, A_1, \nonumber\\
 \tilde{A}_3 &=& A_3.
\eea
The inverse transformation reads:
\bea
A_1 &=& \frac{\tilde{A}_{{3}}\lambda}{2}\cot \left( \frac{\tilde{A}_{{3}}\lambda}{2} \right)\,\tilde{A}_1 + \frac{\tilde{A}_{{3}}\lambda}{2}\,\tilde{A}_2 , \nonumber\\
A_2 &=& \frac{\tilde{A}_{{3}}\lambda}{2}\cot \left( \frac{\tilde{A}_{{3}}\lambda}{2} \right)\,\tilde{A}_2 - \frac{\tilde{A}_{{3}}\lambda}{2}\, \tilde{A}_1 , \nonumber\\
A_3 &=& \tilde{A}_3. \la{AtransLambda}
\eea
This Seiberg-Witten map   allows to construct a new solution of the second master equation, which corresponds to $\tilde\gamma$, via the relation~\eqref{rhoSW}:
\be \la{rhoNewLambda}
\tilde{\rho}(A) = \left( \begin {array}{ccc} \cos \left( A_{{3}}\lambda \right) &-\sin
 \left( A_{{3}}\lambda \right) &0\\ \noalign{\medskip}\sin \left( A_{{
3}}\lambda \right) &\cos \left( A_{{3}}\lambda \right) &0
\\ \noalign{\medskip}0&0&1\end {array} \right). 
\ee 
Though these new solutions of the master equations do not obey the intriguing relation \eqref{irl}, they satisfy another interesting identity:
\be
\tilde{\rho}\,\tilde{\gamma} + \mathbb{1} = \tilde{\rho} + \tilde{\gamma}.
\ee
}

\section{Conclusions}
{As we explained in the review sections of this paper,  Poisson gauge modelss are completely determined by the matrices $\gamma$ and $\rho$, which solve the two master equations.
The new results of the present article are the following.
\begin{itemize}
\item{We obtained the universal solution~\eqref{irl} of the second master equation, which is valid for any Poisson bivector,  linear in coordinates. This solution for $\rho$, together with the previously known universal solution~\eqref{gensolu1} for $\gamma$, allows to build the Poisson gauge theory completely. This is the most important result of our paper.}
\item{The expressions for $\rho$ and $\gamma$, which come out from the universal solutions, do not in general coincide with the expressions that have been obtained previously in the literature for the same noncommutativity. However, the corresponding Poisson gauge theories are connected with each other through Seiberg-Witten maps. We have constructed these maps explicitly, see Eq.~\eqref{AtransKappa}, Eq.~\eqref{AtransSU2} and~\eqref{AtransLambda}. }
\item{Using the Seiberg-Witten maps,  we obtained the new simple expressions~\eqref{rhoNewSU2} and~\eqref{rhoNewLambda} for the matrix $\rho$ in the $su(2)$ and in the $\lambda$-Minkowski cases respectively. 
The matrices~\eqref{gammaNewLambda}  and~\eqref{rhoNewLambda}, related to the $\lambda$-Minkowski noncommutativity,  provide the simplest nontrivial solutions of the master equations, which have been found so far. }
\item{Another set  of interesting results concerns the geometric interpretation of  Poisson gauge models in terms of  symplectic embeddings and  constrains in the extended space. In Eq.~\eqref{pfs} and~\eqref{gcd} we expressed the deformed field strength and the deformed gauge covariant derivatives through Poisson brackets of the ``rotated" constrains in the extended space, obtained via the symplectic embedding. Moreover, we clarified the role of the matrix $\rho$ in this geometric approach: it performs non-linear transformations of  the constrains in the extended space, see Eq.~\eqref{Phi1}.  These results, accompanied by the previously known symplectic geometric interpretation of the matrix $\gamma$ and of the deformed gauge transformation, complete the geometric interpretation of  Poisson gauge theory.}
\item{Finally, we derived the deformed Noether identities~\eqref{Noether}. }
\end{itemize} }
\appendix

\section{Useful formulae}\label{AppB}
Let us first calculate,
\begin{eqnarray}\label{r1}
&&\{f(x),\{g(x),p_a\}\}_{\Phi=0}-\{f(x),\{g(x),p_a\}_{\Phi=0}\}_{\Phi=0}~~~~~~~\\
&&~~~~~~~~~~~=\{f(x),\gamma^i_a(x,p)\,\partial_ig(x)\}_{\Phi=0}-\{f(x),\gamma^i_a(x,A(x))\,\partial_ig(x)\}_{\Phi=0}\notag\\
&&~~~~~~~~~~~=\;\;\left(\partial^b_p\{g(x),p_a\}\right)_{\Phi=0}\left(\{f(x),p_b\}_{\Phi=0}-\{f(x),A_b(x)\}_{\Phi=0}\right)\notag\\
&&~~~~~~~~~~~=\left(\partial^b_p\{g(x),p_a\}\right)_{\Phi=0}\{f(x),\Phi_b\}_{\Phi=0}\,.\notag
\end{eqnarray}
Observe that the Poisson bracket of two functions  of coordinates only does not depend on $p$-variables, so, that
\be\label{r2}
\{f(x),g(x)\}_{\Phi=0}=\{f(x),g(x)\}\,.
\ee
The combination of (\ref{r1}) and (\ref{r2}) implies,
\begin{equation}\label{r3}
\left\{f(x),\{g(x),\Phi_a\}\right\}_{\Phi=0}-\left\{f(x),\{g(x),\Phi_a\}_{\Phi=0}\right\}_{\Phi=0}=\left(\partial^d_p\{g(x),\Phi_a\}\right)_{\Phi=0}\{f(x),\Phi_d\}_{\Phi=0}\,.
\end{equation}

One may also check that,
\be\label{r5}
\{\{f(x),\Phi_b\},\Phi_a\}_{\Phi=0}-\{\{f(x),\Phi_b)\}_{\Phi=0},\Phi_a\}_{\Phi=0}=
\left(\partial^d_p\{f(x),\Phi_b)\}\right)_{\Phi=0}\{\Phi_d,\Phi_a\}_{\Phi=0}
\ee
and,
\be\label{r6}
\{\{\Phi_c,\Phi_b\},\Phi_a\}_{\Phi=0}-\{\{\Phi_c,\Phi_b)\}_{\Phi=0},\Phi_a\}_{\Phi=0}=
\left(\partial^d_p\{\Phi_c,\Phi_b)\}\right)_{\Phi=0}\{\Phi_d,\Phi_a\}_{\Phi=0}.
\ee

\bigskip

\noindent {\bf Proof of Eq. \eqn{gccF}}

\noindent We compute 
 \begin{eqnarray}
\delta_f F_{ab}=\left(\partial^c_p\{\Phi_a,\Phi_b\}\right)_{\Phi=0}\{f,\Phi_c\}_{\Phi=0}-\{\{f,\Phi_a\}_{\Phi=0},\Phi_b\}_{\Phi=0}-\{\Phi_a,\{f,\Phi_b\}_{\Phi=0}\}_{\Phi=0}
\end{eqnarray}
Using Eqs. \eqn{r2}-\eqn{r6}, we rewrite the RHS as,
\begin{eqnarray}
RHS&=&\left(\partial^c_p\{\Phi_a,\Phi_b\}\right)_{\Phi=0}\{f,\Phi_c\}_{\Phi=0}-\{\{f,\Phi_a\},\Phi_b\}_{\Phi=0}-\{\Phi_a,\{f,\Phi_b\}\}_{\Phi=0}\notag\\
&&\left(\partial^c_p\{f,\Phi_a\}\right)_{\Phi=0}\{\Phi_c,\Phi_b\}_{\Phi=0}-\left(\partial^c_p\{f,\Phi_b\}\right)_{\Phi=0}\{\Phi_c,\Phi_a\}_{\Phi=0}\,.
\end{eqnarray}
Using in the first line the Jacobi identity and  Eqs. \eqn{r2}-\eqn{r6} one more time, we end up with
\be
RHS=\{\{\Phi_a,\Phi_b\}_{\Phi=0},f\}_{\Phi=0}+\left(\partial^c_p\{f,\Phi_a\}\right)_{\Phi=0}\{\Phi_c,\Phi_b\}_{\Phi=0}-\left(\partial^c_p\{f,\Phi_b\}\right)_{\Phi=0}\{\Phi_c,\Phi_a\}_{\Phi=0}\, \nonumber
\ee
We therefore have proved that,
\begin{equation}\label{interm}
\delta_f F_{ab}=\{F_{ab},f\}_{\Phi=0}+\left(\partial^c_p\{f,\Phi_a\}\right)_{\Phi=0}F_{cb}-\left(\partial^c_p\{f,\Phi_b\}\right)_{\Phi=0}F_{ca}\,.
\end{equation}
Now remember that
\begin{equation}
\partial^c_p\{f,\Phi_a\}=\partial^c_p\{f,p_a\}\,,
\end{equation}
since $\{f(x),A_a(x)\}$ does not depend on $p$. Also the fact that $F_{ab}$ is a function of $x$ only implies that the Poisson bracket of two functions of $x$, $\{F_{ab},f\}$ does not depend on $p$-variables, so $\{F_{ab},f\}_{\Phi=0}=\{F_{ab},f\}$. 
By substituting this result in eq. \eqn{interm} we have proved \eqn{gccF}.


\begin{thebibliography}{99}

\bibitem{BBKL}
  R.~Blumenhagen, I.~Brunner, V.~Kupriyanov and D.~L\"ust,
  ``Bootstrapping Non-commutative Gauge Theories from L$_\infty$ algebras,''
  JHEP {\bf 1805} (2018) 097

\bibitem{Kupriyanov:2020sgx}
V.~G.~Kupriyanov and P.~Vitale,
``A novel approach to non-commutative gauge theory,''
JHEP \textbf{08} (2020), 041
doi:10.1007/JHEP08(2020)041

\bibitem{Born-Infeld} M. Born and L. Infeld. ``Foundations of the new field theory''  Proc. Roy. Soc. Lond. {\bf A
144} 852 (1934),  425 doi: 10.1098/rspa.1934.0059.
\bibitem{Euler-Heisenberg} W. Heisenberg and H. Euler. ``Consequences of Dirac's Theory of the Positron'' Zeitschrift fur
Physik {\bf 98} 11-12 (1936), 714 doi: 10.1007/BF01343663. arXiv: physics/0605038
[hep-th].
\bibitem{sorokin}
D.~P.~Sorokin,
``Introductory Notes on Non-linear Electrodynamics and its Applications,''
Fortsch. Phys. \textbf{70} (2022) no.7-8, 2200092
[arXiv:2112.12118 [hep-th]].

\bibitem{kontsevich}
M.~Kontsevich,
``Deformation quantization of Poisson manifolds. 1.,''
Lett. Math. Phys. \textbf{66} (2003), 157-216
doi:10.1023/B:MATH.0000027508.00421.bf
[arXiv:q-alg/9709040 [math.QA]].
\bibitem{moyal}
J.~E.~Moyal,
``Quantum mechanics as a statistical theory,''
Proc. Cambridge Phil. Soc. \textbf{45} (1949), 99-124
\bibitem{voros} A. Voros, ``Wentzel-Kramers-Brillouin method in the Bargmann representation,'' Phys. Rev. {\bf A40}, 6814 (1989).

\bibitem{szaborev} R.~J.~Szabo,
``Quantum field theory on noncommutative spaces,''
Phys. Rept. \textbf{378} (2003), 207-299
[arXiv:hep-th/0109162 [hep-th]].

\bibitem{wess}P.~Aschieri, M.~Dimitrijevic, P.~Kulish, F.~Lizzi and J.~Wess,
``Noncommutative spacetimes: Symmetries in noncommutative geometry and field theory,''
Lect. Notes Phys. \textbf{774} (2009), 1-199


 \bibitem{dbv} M. Dubois-Violette, ``D\'erivations et calcul diff\'erentiel non commutatif", C.R. Acad. Sci. Paris, S\'erie I, {\bf 307} (1988) 403. \\

 M. Dubois-Violette, P.W. Michor,``D\'erivations et calcul diff\'erentiel non commutatif II", C.R.Acad. Sci. Paris, S\'erie I, {\bf 319} (1994) 927.
 
 \bibitem{wallet} J.-C. Wallet, ``Derivations of the Moyal algebra and noncommutative gauge theories", SIGMA {\bf 5} (2009) 013. \\
E. Cagnache, T. Masson and J-C. Wallet, ``Noncommutative Yang-Mills-Higgs actions from derivation based differential calculus", J. Noncommut. Geom. {\bf 5} (2011) 39, [arXiv:0804.3061].
 
  \bibitem{wess2}
  J.~Madore, S.~Schraml, P.~Schupp and J.~Wess,
  ``Gauge theory on noncommutative spaces,''
  Eur.\ Phys.\ J.\ C {\bf 16} (2000) 161

\bibitem{twist1} M. Chaichian, A. Tureanu and G. Zet, ``Twist as a Symmetry Principle and the
Noncommutative Gauge Theory Formulation,'' Phys. Lett. {\bf B 651} (2007), 319-323
[arXiv:hep-th/0607179 [hep-th]]

\bibitem{twist2} M. Chaichian and A. Tureanu, ``Twist Symmetry and Gauge Invariance,'' Phys.
Lett. {\bf B 637} (2006), 199-202 [arXiv:hep-th/0604025 [hep-th]].

\bibitem{twist3} P. Aschieri, M. Dimitrijevic, F. Meyer, S. Schraml and J. Wess, ``Twisted gauge
theories,'' Lett. Math. Phys. {\bf 78} (2006), 61-71 [arXiv:hep-th/0603024 [hep-th]].


\bibitem{KS21}
V.~G.~Kupriyanov and R.~J.~Szabo,
``Symplectic embeddings, homotopy algebras and almost Poisson gauge symmetry,''
J. Phys. A \textbf{55} (2022) no.3, 035201

\bibitem{Kup33} 
V.~G.~Kupriyanov,
``Poisson gauge theory,''
JHEP \textbf{09} (2021), 016
doi:10.1007/JHEP09(2021)016

\bibitem{KKV}
V.~G.~Kupriyanov, M.~Kurkov and P.~Vitale,
``$\kappa$-Minkowski-deformation of U(1) gauge theory,''
JHEP \textbf{01} (2021), 102

\bibitem{wein} A. Weinstein, ``The local structure of Poisson manifolds'' J. Diff. Geom. {\bf 18} (1983) 523
doi:10.4310/jdg/1214437787

\bibitem{crainic} M. Crainic and I. Marcut, ``On the existence of symplectic realisations'' J. Symplectic Geom. \textbf{9}  (2011),435-444

\bibitem{Kup21}
V.~G.~Kupriyanov,
``Recurrence relations for symplectic realization of (quasi)-Poisson structures,''
J. Phys. A \textbf{52} (2019) no.22, 225204

\bibitem{Kurkov:2021kxa}
M.~Kurkov and P.~Vitale,
``Four-dimensional noncommutative deformations of U(1) gauge theory and L$_{\infty}$ bootstrap.''
JHEP \textbf{01} (2022), 032

\bibitem{Pachol:2015qia}
A.~Pacho\l{} and P.~Vitale,
``\ensuremath{\kappa}-Minkowski star product in any dimension from symplectic realization,''
J. Phys. A \textbf{48} (2015) no.44, 445202

\bibitem{GK}
D.~M.~Gitman and V.~G.~Kupriyanov,
``On the action principle for a system of differential equations,''
J. Phys. A \textbf{40} (2007), 10071-10081

\bibitem{hammou} A. B. Hammou, M. Lagraa and M. M. Sheikh-Jabbari, ``Coherent state induced star product on
$R^3_\lambda$ and the fuzzy sphere," Phys. Rev. D \textbf{66} (2002), 025025  [arXiv:hep-th/0110291].



\bibitem{angular1} 
J.~Lukierski and M.~Woronowicz, ``New Lie-algebraic and quadratic deformations of Minkowski space from twisted Poincare symmetries,''
Phys. Lett. B \textbf{633} (2006), 116-124
doi:10.1016/j.physletb.2005.11.052


\bibitem{angular2}
G.~Amelino-Camelia, L.~Barcaroli and N.~Loret,
``Modeling transverse relative locality,''
Int. J. Theor. Phys. \textbf{51} (2012), 3359-3375
doi:10.1007/s10773-012-1216-5

\bibitem{angular3}
M.~D.~\'Ciri\'c, N.~Konjik and A.~Samsarov,
``Search for footprints of quantum spacetime in black hole QNM spectrum,''
[arXiv:1910.13342 [hep-th]].

 \bibitem{angular4}
M.~Dimitrijevi\'c \'Ciri\'c, N.~Konjik and A.~Samsarov,
``Noncommutative scalar field in the nonextremal Reissner-Nordstr\"om background: Quasinormal mode spectrum,''
Phys. Rev. D \textbf{101} (2020) no.11, 116009
doi:10.1103/PhysRevD.101.116009


\bibitem{angular5}   M.~Dimitrijevi\'c \'Ciri\'c, N.~Konjik, M. A. Kurkov, F. Lizzi, and P. Vitale, ``Noncommutative field theory from angular twist,''
  Phys.\ Rev.\ D {\bf 98}, no. 8, 085011 (2018)
  doi:10.1103/PhysRevD.98.085011
  
\bibitem{angular6}
O.~O.~Novikov,
``$\mathcal{PT}$-symmetric quantum field theory on the noncommutative spacetime,''
Mod. Phys. Lett. A \textbf{35} (2019) no.05, 2050012
doi:10.1142/S0217732320500121  

\bibitem{angular7}
F.~Lizzi and P.~Vitale,
``Time Discretization From Noncommutativity,''
Phys. Lett. B \textbf{818} (2021), 136372
doi:10.1016/j.physletb.2021.136372  
  

\bibitem{kappa1}  J. Lukierski, A. Nowicki, and H. Ruegg, ``Real forms of complex quantum anti-De Sitter algebra
$U_q$(Sp(4,C)) and their contraction schemes,'' Phys. Lett. B271 (1991) 321?328,
arXiv:hep-th/9108018 [hep-th].

\bibitem{kappa2} J. Lukierski, A. Nowicki, and H. Ruegg, ``New quantum Poincar\'e algebra and k deformed field
theory,'' Phys. Lett. {\bf B293} (1992) 344?352, arXiv:hep-th/9108018 [hep-th]. 

\bibitem{kappa3} S. Majid and H. Ruegg, ``Bicrossproduct structure of kappa Poincar\'e group and noncommutative geometry,'' Phys. Lett. {\bf B334} (1994) 348?354, arXiv:hep-th/9405107 [hep-th].

\bibitem{kappa4}
M.~Dimitrijevic, L.~Jonke, L.~Moller, E.~Tsouchnika, J.~Wess and M.~Wohlgenannt,
``Deformed field theory on kappa space-time,''
Eur. Phys. J. C \textbf{31} (2003), 129-138

\bibitem{kappa5}
M.~Dimitrijevic, L.~Jonke and L.~Moller,
``U(1) gauge field theory on kappa-Minkowski space,''
JHEP \textbf{09} (2005), 068

\bibitem{kappa6} S.~Meljanac, A.~Samsarov, M.~Stojic and K.~S.~Gupta,
``Kappa-Minkowski space-time and the star product realizations,''
Eur. Phys. J. C \textbf{53} (2008), 295-309


\bibitem{Seiberg:1999vs}
N.~Seiberg and E.~Witten,
``String theory and noncommutative geometry,''
JHEP \textbf{09} (1999), 032
doi:10.1088/1126-6708/1999/09/032









  
  \bibitem{BBKT}
  R.~Blumenhagen, M.~Brinkmann, V.~Kupriyanov and M.~Traube,
  ``On the Uniqueness of L$_\infty$ bootstrap: Quasi-isomorphisms are Seiberg-Witten Maps,''
  J.\ Math.\ Phys.\  {\bf 59} (2018) no.12,  123505

\bibitem{Gubitosi:2021itz}
G.~Gubitosi, F.~Lizzi, J.~J.~Relancio and P.~Vitale,
``Double Quantization,'' Phys. Rev. D \textbf{105} (2022) no.12, 126013
[arXiv:2112.11401 [hep-th]].



\bibitem{Smilga}
A.~Smilga,
``Comments on noncommutative quantum mechanical systems associated with Lie algebras,''
J. Geom. Phys. \textbf{180} (2022), 104628



\end{thebibliography}
\end{document}